\title[]{ Quantum dynamical bounds for long-range operators with skew-shift potentials}
\author{Wencai Liu}
\address{
	Department of Mathematics, Texas A\&M University, College Station, TX, 77843, USA.
}
\email{liuwencai1226@gmail.com, wencail@tamu.edu}
\author{Matthew Powell}
\address{
	Department of Mathematics, Georgia Institute of Technology, Atlanta, GA, 30332, USA.
}
\email{powell@math.gatech.edu}
\author{Xueyin Wang}
\address{
	Department of Mathematics, Texas A\&M University, College Station, TX, 77843, USA.
}
\email{xueyin@tamu.edu}
\newcommand{\Z}{\mathbb{Z}}
\newcommand{\T}{\mathbb{T}}
\newcommand{\R}{\mathbb{R}}
\newcommand{\C}{\mathbb{C}}
\theoremstyle{plain}
\newtheorem{theorem}{Theorem}[section]
\newtheorem{corollary}[theorem]{Corollary}
\newtheorem{lemma}[theorem]{Lemma}
\theoremstyle{definition}
\newtheorem{definition}[theorem]{Definition}
\newtheorem{remark}[theorem]{Remark}
\begin{document}

\begin{abstract}
	We employ Weyl's method and Vinogradov's method to analyze skew-shift dynamics on semi-algebraic sets. Consequently, we improve the quantum dynamical upper bounds of Jitomirskaya-Powell, Liu, and Shamis-Sodin for long-range operators with skew-shift potentials.
\end{abstract}

\maketitle	

\section{Introduction}
In this paper, we are interested in studying the quantum dynamics of long-range operators on the integer lattice $\Z.$ That is, we study bounded self-adjoint operators $H$ acting on $\ell^2(\Z)$ in the following way:
\begin{equation*}
	(Hu)_n = \sum_{n' \in \Z} A(n,n')u_{n'} + V(n) u_n
\end{equation*}
where $V(n)$ is a sequence of real numbers and $A(n, n')$ satisfies, for any $n, n' \in \Z,$
\begin{align*}
	| & A(n,n')| \leqslant C \mathrm{e}^{-c|n - n'|}, \\
	  & A(n, n') = \overline{A(n',n)},\\
   &A(n + k, n' + k) = A(n, n'), \ \ \text{ for all }k\in \Z.                
\end{align*}
Operators such as $H$ are commonly associated with the Hamiltonians of quantum particles that evolve according to Schr\"odinger dynamics. For $p > 0$ and $\phi \in \ell^2(\Z),$ one of the primary objects of interest for such operators is the $p$th moment of the position operator, 
given by 
\begin{equation*}
	\langle| X _H|_\phi^p\rangle(T) = \sum_{n \in \mathbb{Z}} |n|^{p} |(\mathrm{e}^{-\mathrm{i}TH}\phi,\delta_n)|^2
\end{equation*}
and its time average, 
\begin{equation*}
	\langle|\tilde X _H|_\phi^p\rangle(T) = \frac 2 T \int_0^\infty \mathrm{e}^{-2t/T} \langle| X _H|_\phi^p\rangle(t) \ \mathrm{d}t.  
\end{equation*}     
These moments relate to the spread of the wavepacket $\mathrm{e}^{-\mathrm{i}tH}\phi,$ which is in turn closely related to the spectral measure $\mu_\phi.$ 
For example, the celebrated RAGE Theorem of Ruelle, Amrein, Georgescu, and Enss says that 
$$\lim_{T \to \infty} \langle|\tilde X _H|_\phi^p\rangle(T) = \infty$$
if the spectral measure $\mu_\phi$ is not pure point. Refinements of this relation have been obtained in various works; we highlight two: one by Last \cite{LastQD1996} and another more recently by Landrigan-Powell \cite{LandriganPowell} both indicate that continuity of $\mu_\phi$ with respect to a (generalized) Hausdorff measure implies a lower bound on 
$\langle|\tilde X _H|_\phi^p\rangle(T),$ 
typically called ``quasi-ballistic transport'' in the literature,
while upper bounds on 
$\langle|\tilde X _H|_\phi^p\rangle(T)$ 
imply that 
$\mu_\phi$ 
must be singular with respect to a particular (generalized) Hausdorff measure. The converse is not true in general: the singularity of 
$\mu_\phi$ 
alone does not imply anything about the behavior of 
$\langle|\tilde X _H|_\phi^p\rangle(T)$ 
(see e.g. \cite{LastQD1996} or \cite{dRJLS1996}).

It is well-known that the motion of the quantum particle cannot be faster than ballistic:
$$\langle| X _H|_\phi^p\rangle(T) \leqslant C^p T^p.$$
More refined (upper or lower) bounds typically have to be obtained on a case-by-case basis (see, e.g. \cite{DamanikTcherem1, DamanikTcherem2, GuarneriSchulzBaldez, Haeming2024, HanJitomirskaya, JPo, JitomirskayaZhang, LastQD1996, LiuQD23, ShamisSodinQD}, see also the survey \cite{DamanikMalinovitchYoung} and references therein). 
Two fruitful methods, broadly speaking, for obtaining lower bounds include:  (1) a careful study of the spectral measures and/or solutions to an eigenvalue equation (see, e.g. \cite{GuarneriSchulzBaldez, JitomirskayaZhang, LastQD1996} see also \cite{DamanikTcherem1} and the references therein) and (2) approximation via operators with ballistic transport (c.f. \cite{Haeming2024, LastQD1996}). 
On the other hand, upper bounds have been obtained using various methods often inspired by localization proofs \cite{HanJitomirskaya, LiuQD23, JPo, ShamisSodinQD} or involving complex analytic methods \cite{DamanikTcherem1, DamanikTcherem2}. The focus of this paper will be to obtain upper bounds for a large class of operators.

Let us consider a particular family of operators $\{H_x\}_{x\in\Omega}$ with a dynamically-defined potential. Let
$(\Omega, f, \mu)$ 
be a measure-preserving dynamical system, and suppose 
$V(n) = \lambda v(f^nx)$ 
for some $\mu$-measurable function $v: \Omega \to \R$ and some $\lambda \ne 0,$ which we call the coupling constant. Such families are particularly well-studied: 
\begin{enumerate}
	\item When $\Omega$ is a sequence space and $f$ is a Bernoulli shift, we obtain a long-range Anderson model;
	      	      
	\item When 
	      $\Omega = \T^b$, $(1,\omega)$ rationally dependent, and $f$ is the shift by $\omega$, we obtain the periodic operators;
	      	              
	\item When $\Omega = \T^b$, $(1,\omega)$ rationally independent, and $f$ is the shift by $\omega$, we obtain the quasi-periodic operators;
	      	              
	\item When 
	      $\Omega = \T^b$, $\omega \in \mathbb{T}\backslash \mathbb{Q}$, and $f$ is the skew-shift:
	      \begin{equation*}
	      	fx = (x_1 + \omega, x_2 + x_1, x_3 + x_2, \cdots, x_b + x_{b - 1}),
	      \end{equation*}
	      we obtain a model closely related to the kicked-rotor problem.
\end{enumerate}

When investigating quasi-periodic and skew-shift models, it is frequently necessary to impose an arithmetic constraint on the admissible frequencies $\omega$, since the spectral properties of these models depend sensitively on the arithmetic properties of $\omega.$ One such constraint we will employ here is the Diophantine condition $DC(\gamma,\tau)$ with $\gamma>0, \tau>1$. Specifically, we say $\omega \in DC(\gamma,\tau)$ if
\begin{equation*}
	\| k \omega\|_{\mathbb{T}} \geqslant \frac{\gamma}{|k|^\tau}, \ \text{for any}\ k\in \mathbb{Z}\backslash\{0\}.
\end{equation*}

For Anderson models with a large coupling constant, or at a spectral edge, dynamical localization holds ($\langle| X _H|_\phi^p\rangle(T)$ is bounded uniformly in $T$). For periodic models, the spectral measures are absolutely continuous, leading to $\langle| X_{H}|_\phi^p\rangle(T)$ being arbitrarily close to ballistic behavior. As stated above, however, quasi-periodic models with $v$ being real analytic, the dynamics exhibit a delicate dependence on the arithmetic properties of both $\omega$ and $x$. Specifically, when $\omega \in DC(\gamma,\tau)$ and $\lambda$ is sufficiently large (depending only on $v$), dynamical localization is known to hold for a.e. (but not every) $x$. Despite the intricate dynamics, bounds on $\langle| X_{H_{x,\omega}}|_\phi^p\rangle(T)$ exhibit greater stability. It is known that under the conditions $\omega \in DC(\gamma,\tau)$ and $\lambda$ sufficiently large (depending only on $v$),  the inequality 
\begin{equation*}
	\langle| X_{H_{x,\omega}}|_\phi^p\rangle(T) \leqslant C (\log T)^{p\sigma + \varepsilon}
\end{equation*}
holds uniformly in $x$, where $\sigma = \sigma(b,\tau)$ 
\cite{JPo, LiuQD23, ShamisSodinQD}.
For skew-shift models, the story is even more delicate (see \cite{HanJitomirskaya, JPo} for the Schr\"odinger case and \cite{ShamisSodinQD} for the long-range case). For $v$ real analytic and $\omega \in DC(\gamma, \tau)$,  there is $\lambda_0 = \lambda_0(v,\omega)$ such that 
\begin{equation*}
	\langle| X_{H_{x,\omega}}|_\phi^p\rangle(T) \leqslant C (\log T)^{p\sigma +\varepsilon}
\end{equation*}
holds uniformly in $x$ for $\lambda > \lambda_0$, with
\begin{equation*}
	\sigma = 4^{b - 1}b^3\tau^2.
\end{equation*}

Quasi-periodic and skew-shift models have been studied extensively; localization (see \cite{BGLocal, MetalInsulator} and references therein for quasi-periodic results and \cite{BourgainSkew, BGSSkew, HanLemmSchlag1, HanLemmSchlag2, SilviusSkew} and references therein for skew-shift results) and quantum dynamics (see \cite{BJBand, Fillman, GeKachkovskiy, JitomirskayaZhang,  LiuQD23, ZhangZhao, Zhao} and references therein for quasi-periodic results and \cite{HanJitomirskaya, JPo, ShamisSodinQD} and references therein for skew-shift results) are of particular interest.

The main idea of \cite{HanJitomirskaya} and \cite{JPo}, which are specifically applicable to Schr\"odinger operators with short-range potentials, involves the combination of an LDT (large deviation theorem) with a sublinear bound and a relation derived by Damanik and Tcheremchantsev \cite{DamanikTcherem1, JPo}. It was remarked in \cite{DamanikTcherem1, JPo} that the quantum dynamical bounds could be improved if either the LDT or the sublinear bound were improved. The idea of \cite{ShamisSodinQD}, which applies to long-range operators of the form we consider here, was to combine an LDT with integration along a suitable contour.

Here, we will take an approach which is closer in spirit to Liu \cite{LiuQD23}. There, one of the authors developed a new method inspired by Anderson localization proofs for quasi-periodic Schr\"odinger operators to show that, for long-range quasi-periodic operators, quantum dynamical upper bounds follow from a suitable sublinear bound of the bad Green's functions. We extend this argument to the skew-shift setting and prove novel sublinear bounds for the skew-shift to obtain better quantum dynamical upper bounds.

Denote
\begin{equation*}
	\psi(b)=\begin{cases}
	2^{b-1},& \text{if} \ 2\leqslant b\leqslant 5,\\
	b(b-1),&\text{if} \ b\geqslant 6.
	\end{cases}
\end{equation*}
We prove the following:

\begin{theorem}\label{MainTheorem}
	Suppose $\omega \in DC(\gamma,\tau)$. Let 
	\begin{equation*}
		(H_{x,\omega}u)_n = \sum_{n' \in \mathbb{Z}} A(n,n')u_{n'} +  v(f^nx) u_n
	\end{equation*}
	where $x \in \T^b$, $v$ is real analytic on $\mathbb{T}^b$, and $f$ is the skew-shift on $\mathbb{T}^b.$ Suppose $H_{x,\omega}$ satisfies the LDT (see Section \ref{App} for the precise definition). Then for any $\phi$ with compact support and $p > 0$ there exists $C=C(\varepsilon, v, A, b, \gamma, \tau, \phi, p)$ such that
	\begin{align}\label{QD1}
		\langle|\tilde X _{H_{x,\omega}}|_\phi^p\rangle(T) & \leqslant C (\log T)^{\frac{p}{\delta} + \varepsilon},            \\
		\langle| X _{H_{x,\omega}}|_\phi^p\rangle(T)       & \leqslant C (\log T)^{\frac{p}{\delta} + \varepsilon},\label{QD2} 
	\end{align}
	where $\delta =  \frac{1}{\tau b \psi(b)}$.
\end{theorem}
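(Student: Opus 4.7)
The plan is to extend Liu's \cite{LiuQD23} framework for long-range quasi-periodic operators to the skew-shift setting. That framework, itself an outgrowth of the Damanik--Tcheremchantsev resolvent method, reduces the quantum dynamical upper bound to a single analytic input: a \emph{sublinear bound} on the number of integers $n \in [1, T]$ for which the Green's function $G_{\Lambda}(E; f^n x)$ fails to decay exponentially on a suitable box $\Lambda$ of size $N \sim (\log T)^\eta$. Given such a bound together with the assumed LDT, the resolvent identity combined with Cauchy integration along a suitable contour delivers both \eqref{QD1} and \eqref{QD2}. The novelty is therefore entirely in proving the sublinear bound in the skew-shift setting.

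The LDT identifies a semi-algebraic ``bad'' set $\mathcal{B}_N \subset \T^b$ of measure at most $\mathrm{e}^{-c N^\alpha}$ on which exponential decay of $G_\Lambda$ fails. The required sublinear bound then reduces to counting $n \in [1, T]$ with $f^n x \in \mathcal{B}_N$. Unlike the quasi-periodic (linear-shift) case, where Weyl equidistribution for semi-algebraic sets already suffices, the skew-shift orbit is \emph{polynomial}: the $k$th coordinate of $f^n x$ is a polynomial in $n$ of degree $k$ whose top coefficient is a nonzero multiple of $\omega$ (hence Diophantine by hypothesis). Controlling the discrepancy of the orbit therefore amounts to estimating polynomial exponential sums $\sum_{n=1}^T \mathrm{e}^{2\pi \mathrm{i} P(n)}$ with $\deg P \leq b$. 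For $2 \leq b \leq 5$, Weyl's inequality is sharper and produces the exponent $2^{b-1}$, while for $b \geq 6$ Vinogradov's mean value theorem yields the superior exponent $b(b-1)$; together these explain the definition of $\psi(b)$.

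Concretely, I would first use a Pyartli-type argument together with $\omega \in DC(\gamma,\tau)$ to ensure that the polynomials arising from the coordinates of $f^n x$ possess a sufficiently Diophantine coefficient in a controlled range. Then, for each $b$, I would apply either Weyl's inequality or Vinogradov's mean value theorem to bound $\sum_n \mathrm{e}^{2\pi \mathrm{i} \langle k, f^n x\rangle}$, use the Erd\H{o}s--Tur\'an inequality together with the semi-algebraic complexity of $\mathcal{B}_N$ to pass from exponential-sum bounds to hit counts, and finally optimize $N \sim (\log T)^{1/\delta + \varepsilon}$ to obtain \eqref{QD1}--\eqref{QD2}. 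The main obstacle is precisely this sublinear counting step: one needs the Weyl/Vinogradov bounds to remain uniform in the (arbitrarily complicated) semi-algebraic description of $\mathcal{B}_N$, and the three sources of loss---Diophantine exponent $\tau$, polynomial-sum exponent $\psi(b)$, and the $b$-fold slicing needed to reduce semi-algebraic sets to boxes---must be balanced to produce exactly $\delta = 1/(\tau b \psi(b))$ rather than a worse exponent.
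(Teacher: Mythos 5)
Your proposal follows the paper's architecture essentially verbatim: reduce via Liu's abstract framework to a sublinear bound on skew-shift hits of a semi-algebraic bad set, pass to exponential sums via a Fej\'er-type majorant together with Bourgain's covering lemma for semi-algebraic sets, and estimate those sums by Weyl's inequality for $2\leqslant b\leqslant 5$ and Vinogradov's mean value theorem (in the sharp Bourgain--Demeter--Guth form) for $b\geqslant 6$, with the Diophantine property of $\omega$ surviving multiplication by $1/i!$. The one ingredient you do not make explicit---and which the paper identifies as its source of improvement over Han--Jitomirskaya---is the stratification of the frequency lattice $\{\mathbf{k}:\|\mathbf{k}\|<R\}$ by the index of the highest nonzero coordinate, so that each stratum $\mathcal{K}^i$ is handled with H\"older and the Weyl/Vinogradov exponent appropriate to the degree-$i$ polynomial it actually sees; without this decomposition the contributions from the different degrees compound into the older exponent $\tau b(2^b-1)$ rather than the claimed $\tau b\,\psi(b)$.
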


\begin{corollary}\label{Maincorollary}
	Suppose $\omega \in DC(\gamma,\tau)$. Let 
	\begin{equation*}
		(H_{x,\omega}u)_n = \sum_{n' \in \mathbb{Z}} A(n,n')u_{n'} +  \lambda v( f^nx) u_n
	\end{equation*}
	where $x\in \mathbb{T}^b$, $v$ is real analytic on $\mathbb{T}^{b}$, and $f$ is the skew-shift on $\T^b$. Then there is $\lambda_0 = \lambda_0(v,\omega)$ such that if $\lambda > \lambda_0$,  \eqref{QD1} and \eqref{QD2} hold.
\end{corollary}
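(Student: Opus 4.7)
The plan is to derive Corollary \ref{Maincorollary} as a direct consequence of Theorem \ref{MainTheorem}, once I verify that the LDT hypothesis of the theorem holds for the rescaled potential $\lambda v$ when $\lambda$ is sufficiently large. The coupling constant does not interact with the off-diagonal part $A(n,n')$, and since $\lambda v$ is again real analytic on $\T^b$, Theorem \ref{MainTheorem} applies verbatim as soon as the LDT is available, producing exactly the bounds \eqref{QD1} and \eqref{QD2} with the claimed exponent $\delta = 1/(\tau b \psi(b))$.

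First, I would invoke the large-coupling large deviation theorem for skew-shift operators with Diophantine frequency and real analytic sampling function. For $\omega \in DC(\gamma,\tau)$ and real analytic $v$ on $\T^b$, there exists $\lambda_0 = \lambda_0(v,\omega)$ such that, for $\lambda > \lambda_0$, the Lyapunov exponent is positive and the finite-volume log-determinants (equivalently, Green's functions) satisfy sub-exponential large-deviation estimates. In the Schr\"odinger setting, this is a classical result of Bourgain-Goldstein-Schlag \cite{BourgainSkew, BGSSkew}; for long-range operators of the form considered here, the analogous LDT has been developed in \cite{ShamisSodinQD} (see also \cite{HanLemmSchlag1, HanLemmSchlag2} for closely related multiscale ingredients in the skew-shift category). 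The exponential off-diagonal decay of $A(n,n')$ is sufficient to extend the short-range multiscale machinery via standard Schur/Cartan-type resolvent bounds, and the threshold $\lambda_0$ can be chosen to depend only on $v$ and $\omega$.

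Second, I would check that this LDT matches the precise formulation laid out in Section \ref{App}. This is essentially a translation step: the sub-exponential estimate produced by the multiscale argument is of exactly the type needed to trigger the sublinear bad-Green's-function analysis that drives the proof of Theorem \ref{MainTheorem}. After this verification, one applies Theorem \ref{MainTheorem} with $v$ replaced by $\lambda v$ and obtains both dynamical bounds simultaneously, with the same $\delta$.

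The main obstacle, to the extent there is one, is purely a matter of bookkeeping: ensuring that the LDT from the literature is stated (or easily restated) in the exact quantitative form required by Section \ref{App}, with the correct dependence on frequency and scale. No new analysis of the skew-shift dynamics on semi-algebraic sets and no new control of the Green's function are needed beyond what is already encoded in Theorem \ref{MainTheorem}.
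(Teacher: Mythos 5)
Your approach is exactly the paper's: verify that the LDT hypothesis of Theorem \ref{MainTheorem} holds for $\lambda v$ at large coupling, then apply Theorem \ref{MainTheorem} verbatim. The only difference is the citation: the paper invokes Theorem 3.14 of \cite{LiuPDE} as a single clean reference that directly supplies the LDT for the long-range skew-shift model in the precise form needed, whereas you assemble it from several sources (\cite{BourgainSkew, BGSSkew, ShamisSodinQD, HanLemmSchlag1, HanLemmSchlag2}) and acknowledge that a translation/bookkeeping step may be required; the paper's choice avoids that concern entirely.
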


\begin{proof}
	By Theorem 3.14 from \cite{LiuPDE}, there is $\lambda_0 = \lambda_0(v,\omega)$ such that the LDT holds for $\lambda>\lambda_{0}$. The corollary now follows immediately from Theorem \ref{MainTheorem}.
\end{proof}

For long-range operators, as far as we know, the previous best upper bound on $\langle| X _{H_{x,\omega}}|_\phi^p\rangle(T)$ in our framework, due to Shamis and Sodin \cite{ShamisSodinQD}, was 
$(\log T)^{p/\delta+ \varepsilon}$ with $\delta=(4^{b - 1}b^3\tau^2)^{-1}$. Here we have tightened this estimate to
$$(\log T)^{p \tau b\psi(b)+ \varepsilon} < (\log T)^{p 4^{b - 1}b^3\tau^2 + \varepsilon}.$$

For Schr\"odinger operators, to the best of our knowledge, our upper bounds are also best. Han and Jitomirskaya \cite{HanJitomirskaya}
obtained a bound of the discrepancy $N^{-1/(\tau  (2^{b}-1))+\varepsilon}$. From the discrepancy, it is possible to derive a sublinear bound $N^{1-\delta}$ (combining with the dimension $b$ loss) with $\delta=(\tau b (2^{b}-1))^{-1}$. Plugging this into the machinery developed by Liu \cite{LiuQD23} (see also Section \ref{App}), a weaker bound $(\log T)^{p/\delta+\varepsilon}$ than that in Theorem \ref{MainTheorem} arises. Later, Jitomirskaya and Powell \cite{JPo} improved the upper bound with $\delta=(\tau b^2 (2^{b-1}))^{-1}$ (in \cite{JPo}, a similar Diophantine condition was used with $\tau=1+\varepsilon$). According to the following relation:
\begin{equation*}
	\tau b\psi(b) \leqslant \min\left\{\tau b^{2} 2^{b-1} \ \text{\cite{JPo}}, \tau b(2^b -1) \ \text{\cite{HanJitomirskaya}}\right\},
\end{equation*}
Theorem \ref{MainTheorem} and Corollary \ref{Maincorollary} are, to the best of our knowledge, better than any previous bounds.

Let us now say a few words about our argument. Our central object is the so-called sublinear bound: suppose $\mathcal{S} \subseteq \T^b$ is a semi-algebraic set of degree $B$, then we say $\mathcal{S}$ satisfies a sublinear bound if for any $N \in \mathbb{N}$ with $\log B \lesssim \log N < - \log(\mathrm{Leb}(\mathcal{S}))$, the following inequality
\begin{equation*}
	\# \{1 \leqslant n \leqslant N: f^n x \in \mathcal{S}\} \leqslant N^{1 - \delta}
\end{equation*}
holds for all $x\in \T^b$.

First, we establish an abstract result (i.e. Theorem \ref{thm1}) relating sublinear bounds to upper bounds on $\langle|X_{H_{x,\omega}}|_\phi^p\rangle(T)$ by essentially following the argument from \cite{LiuQD23} (see Section \ref{App} for details). This method reduces the problem to proving a sublinear bound and is also a major source of improvement over the works of Han-Jitomirskaya, Jitomirskaya-Powell, and Shamis-Sodin \cite{HanJitomirskaya, JPo, ShamisSodinQD}. In fact, combining this result with the sublinear bound proved in \cite{JPo} or in \cite{LiuPDE} yields an improvement.

Our next step is to improve on the sublinear bounds established in \cite{JPo} and \cite{LiuPDE}. One fruitful way to obtain sublinear bounds on $\mathcal{S}\subseteq \mathbb{T}^{b}$ is to cover $\mathcal{S}$ by $\epsilon$-balls $B_\epsilon$ and estimate the following
\begin{equation*}
	\# \{1 \leqslant n \leqslant N: f^n x \in B_\epsilon\}.
\end{equation*}
This can be reduced, via Fourier analysis (see Lemma \ref{fejer}) to estimating an exponential sum of the form
\begin{equation*}
	\sum_{ |k_{1}| <R} \cdots \sum_{ |k_{b}| <R} \sum_{n = 1}^N \mathrm{e}^{2\pi\mathrm{i} \langle \mathbf{k}, f^{n}x \rangle}.
\end{equation*}
We employ two different number-theoretic arguments to estimate these exponential sums which are optimal in different situations. Specifically, we use the classic method of Weyl's method (see, e.g. \cite{hughbook}) when considering $\T^b, 2 \leqslant b \leqslant 5$ (see Section \ref{Weyl} for details), and we use Vinogradov's method (see, e.g. \cite{hughbook}) and recent proof of Vinogradov's mean value theorem by Bourgain-Demeter-Guth \cite{BDG16} when $b > 5$ (see Section \ref{Vino} for details). We refer readers to \cite{gz19} for the history and recent developments related to solutions of the Vinogradov system.
Han and Jitomirskaya also employed Weyl's method in \cite{HanJitomirskaya} to derive their upper bounds, but we introduce some techniques (square lattice decomposition) which improve on those estimates (c.f.  \cite[Section 5]{HanJitomirskaya} and Section \ref{Weyl} below).

The rest of our paper is organized as follows. In Section \ref{Preliminaries}, we provide useful definitions and prove estimates which are used throughout this paper. In Section \ref{Weyl}, we use Weyl's method to obtain a sublinear bound for semi-algebraic sets. In Section \ref{Vino}, we use the Vinogradov's method to obtain a sublinear bound for semi-algebraic sets. Finally, in Section \ref{App}, we detail how to relate a particular sublinear bound to an upper bound on $\langle|X _{H_{x,\omega}}|_\phi^p\rangle(T)$ and prove Theorem \ref{MainTheorem}.

\section{Preliminary}\label{Preliminaries}

We use $\|\cdot\|_{\mathbb{T}^{b}}$ to represent the distance to the nearest integer lattice in $\mathbb{Z}^{b}$. For $\mathbf{k}\in \mathbb{Z}^{b}$, define $\|\mathbf{k}\|:=\max\{|k_{1}|, |k_{2}|, \cdots, |k_{b}|\}$. Throughout the paper, $A\lesssim B$ means $A \leqslant C B$ for some constant $C>0$.  Denote by $\deg(P)$ the degree of the polynomial $P$.

\subsection{Exponential sums: Weyl's method}
Denote
\begin{equation*}
	S=S(\alpha)=\sum_{n=1}^{N}\mathrm{e}^{2\pi \mathrm{i} P(n;\alpha)},
\end{equation*}
where $P(x;\alpha)=\sum_{j=0}^{b}\alpha_{j}x^{j}$ is a polynomial with real coefficients.

\begin{lemma}\cite[page 42]{hughbook}\label{ddim}
	For any $b\geqslant 2$, 
	\begin{equation*}
		|S|^{2^{b-1}} \lesssim N^{2^{b-1}-1}+ N^{2^{b-1}-b} \sum_{h_{1}=1}^{N}\cdots \sum_{h_{b-1}=1}^{N}\min \bigg(N, \frac{1}{\|b!h_{1}\cdots h_{b-1}\alpha_{b}\|_{\mathbb{T}}}\bigg).
	\end{equation*}
\end{lemma}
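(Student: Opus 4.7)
The plan is to prove this by the classical Weyl differencing procedure, which reduces the degree of the polynomial phase by one at each step via a Cauchy--Schwarz argument, and then to iterate $b-1$ times until the phase becomes linear.

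First, I would establish the one-step differencing lemma. For any polynomial $Q$ of degree $d \geqslant 1$ and any sum $T = \sum_{n \in I} e^{2\pi \mathrm{i} Q(n)}$ over an interval $I \subseteq [1,N]$, Cauchy--Schwarz (applied to $T = \sum_{n} \mathbf{1}_{I}(n) e^{2\pi\mathrm{i} Q(n)}$ and a shift argument, or directly to $|T|^2 = \sum_{n,m} e^{2\pi\mathrm{i}(Q(n)-Q(m))}$ via the substitution $m = n+h$) yields
\begin{equation*}
|T|^{2} \lesssim N + N \sum_{h=1}^{N-1} \Bigl| \sum_{n \in I_{h}} e^{2\pi\mathrm{i} \Delta_{h} Q(n)} \Bigr|,
\end{equation*}
where $\Delta_{h} Q(n) := Q(n+h) - Q(n)$ has degree $d-1$ in $n$ and leading coefficient $d\, h \cdot (\text{leading coeff of } Q)$, and $I_{h}$ is an interval of length at most $N$.

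Second, I would iterate this bound $b-1$ times starting from $Q = P(\cdot;\alpha)$. Writing $u_{j} := |S|^{2^{j}}$ after $j$ steps and tracking the leading coefficient: after one step it becomes $b\,h_{1}\alpha_{b}$, after two steps $b(b-1)h_{1}h_{2}\alpha_{b}$, and after $b-1$ steps the phase is linear in $n$ with leading coefficient $b!\, h_{1} h_{2} \cdots h_{b-1}\alpha_{b}$ (lower-order terms in $n$ do not affect the modulus of the geometric sum). Raising the one-step inequality to the appropriate power and multiplying telescopically, one obtains
\begin{equation*}
|S|^{2^{b-1}} \lesssim N^{2^{b-1}-1} + N^{2^{b-1}-b}\!\!\sum_{h_{1}=1}^{N-1}\!\cdots\!\sum_{h_{b-1}=1}^{N-1} \Bigl|\sum_{n} e^{2\pi \mathrm{i}(b!\,h_{1}\cdots h_{b-1}\alpha_{b}\, n + \text{lower})}\Bigr|,
\end{equation*}
where the $N^{2^{b-1}-1}$ term absorbs the diagonal contributions $h_{j}=0$ accumulated along the iteration, and the exponent $2^{b-1}-b$ is exactly what one gets from tracking the factors of $N$ that are pulled out at each Cauchy--Schwarz step.

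Third, I would apply the standard geometric sum estimate $\bigl|\sum_{n=1}^{M} e^{2\pi\mathrm{i} \beta n}\bigr| \lesssim \min(M, 1/\|\beta\|_{\mathbb{T}})$ with $M \leqslant N$ and $\beta = b!\,h_{1}\cdots h_{b-1}\alpha_{b}$ to the innermost sum. This produces the claimed bound after replacing $h_{j}$ ranging over $[1,N-1]$ with the slightly larger range $[1,N]$ (at the cost of an admissible constant).

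The main obstacle, which is a matter of careful bookkeeping rather than conceptual depth, is to verify that the iteration really produces the exponents $2^{b-1}-1$ and $2^{b-1}-b$ with the correct constants, and that the leading coefficient multiplies cleanly into $b!\,h_{1}\cdots h_{b-1}\alpha_{b}$ (this last point is just the observation that differencing $x^{b}$ successively yields factorial coefficients $b(b-1)\cdots 1$). The lower-order coefficients appearing after each differencing step do not participate in the final Diophantine estimate because after $b-1$ differencings the phase is linear in $n$ and only its slope matters for $|\sum_{n} e^{2\pi\mathrm{i}\beta n}|$.
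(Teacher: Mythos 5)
Your plan is the standard Weyl differencing argument, which is precisely the proof the cited reference (Vaughan's book) gives for this inequality, and your final bookkeeping (the exponents $2^{b-1}-1$ and $2^{b-1}-b$, the leading coefficient $b!\,h_1\cdots h_{b-1}\alpha_b$, and the absorption of the $h_j=0$ contributions into $N^{2^{b-1}-1}$) is all correct. There is, however, a slip in the one-step differencing lemma as you state it. Expanding $|T|^2=\sum_{n,m}e^{2\pi\mathrm{i}(Q(n)-Q(m))}$ and substituting $m=n+h$ gives
\begin{equation*}
|T|^{2}\ \leqslant\ N\ +\ 2\sum_{h=1}^{N-1}\Bigl|\sum_{n\in I_{h}}e^{2\pi\mathrm{i}\,\Delta_{h}Q(n)}\Bigr|,
\end{equation*}
with no factor of $N$ in front of the $h$-sum; that factor does not appear at the squaring stage. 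The powers of $N$ that accumulate to produce $N^{2^{b-1}-b}$ come only from the Cauchy--Schwarz step used to pass from $\bigl(\sum_{\mathbf{h}}|T_{\mathbf{h}}|\bigr)^2$ to $(\text{number of terms})\cdot\sum_{\mathbf{h}}|T_{\mathbf{h}}|^2$ when you iterate. If one literally iterated your stated one-step $|T|^2\lesssim N+N\sum_h|T_h|$, the exponent on $N$ in the final inequality would come out too large. Since you explicitly note in your second paragraph that the $N$-factors are ``pulled out at each Cauchy--Schwarz step,'' this reads as a misstatement of the lemma rather than a conceptual gap, but the one-step inequality should be corrected before the telescoping argument is carried out.
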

\begin{remark}\label{ddim2}
	Note that $\min (N, \frac{1}{\| b!h_{1}\cdots h_{b-1}\alpha_{b}\|_{\mathbb{T}}})\geqslant 1$, thus Lemma \ref{ddim} implies
	\begin{equation*}
		|S|^{2^{b-1}} \lesssim N^{2^{b-1}-b} \sum_{h_{1}=1}^{N}\cdots \sum_{h_{b-1}=1}^{N}\min \bigg(N, \frac{1}{\|b!h_{1}\cdots h_{b-1}\alpha_{b}\|_{\mathbb{T}}}\bigg).
	\end{equation*}
\end{remark}

Multi-sums often appear in this paper. The following estimate is used frequently to reduce the multi-sums to a single sum.
\begin{lemma}\cite[Lemma 13, page 71]{Koro}\label{sums}
	Let $M$ and $m_{1},\cdots,{m}_{n}$ be positive integers. Denote by $\tau_{n}(M)$ the number of
	solutions of the equation $m_{1}\cdots m_{d}=M$. Then for any $0<\varepsilon\leqslant 1$,
	\begin{equation*}
		\tau_{n}(M) \leqslant C(\varepsilon, n)M^{\varepsilon}.
	\end{equation*}
\end{lemma}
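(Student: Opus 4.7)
The plan is a straightforward induction on $n$, bootstrapped from the classical divisor bound $\tau(M) := \tau_2(M) \lesssim_{\varepsilon} M^{\varepsilon}$.

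\textbf{Base case $n=2$.} Here $\tau_2(M)$ is the ordinary number-of-divisors function. Using its multiplicativity and writing $M=\prod_j p_j^{a_j}$, I would estimate
$$\frac{\tau(M)}{M^{\varepsilon}} \;=\; \prod_j \frac{a_j+1}{p_j^{\varepsilon a_j}}.$$
Then I would split the primes into two groups. For $p_j \geqslant 2^{1/\varepsilon}$ the factor satisfies $(a_j+1)/p_j^{\varepsilon a_j} \leqslant (a_j+1)/2^{a_j} \leqslant 1$. For $p_j < 2^{1/\varepsilon}$ there are only finitely many choices (depending on $\varepsilon$), and for each such prime the function $a \mapsto (a+1)/p^{\varepsilon a}$ attains a finite maximum depending only on $p$ and $\varepsilon$. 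Taking the product over the small primes gives a constant $C(\varepsilon)$ with $\tau(M) \leqslant C(\varepsilon)M^{\varepsilon}$.

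\textbf{Inductive step.} For the passage from $n-1$ to $n$, the key is the convolution identity
$$\tau_n(M) \;=\; \sum_{d\mid M} \tau_{n-1}(M/d),$$
obtained by conditioning on $m_n = d$. Applying the inductive hypothesis at level $\varepsilon/2$ yields $\tau_{n-1}(M/d) \leqslant C(\varepsilon/2,n-1)\,M^{\varepsilon/2}$ uniformly in $d\mid M$, so
$$\tau_n(M) \;\leqslant\; C(\varepsilon/2,n-1)\,\tau(M)\,M^{\varepsilon/2} \;\leqslant\; C(\varepsilon/2,n-1)\,C(\varepsilon/2)\,M^{\varepsilon} \;=:\; C(\varepsilon,n)\,M^{\varepsilon},$$
where in the second inequality I applied the base case at level $\varepsilon/2$.

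The only step with any real content is the base case, which is elementary but relies on the multiplicative decomposition and the two-regime split above; after that, the induction is pure bookkeeping. An alternative, equivalent route would be to use the closed form $\tau_n(M) = \prod_j \binom{a_j+n-1}{n-1}$ directly together with the same small-prime/large-prime dichotomy, but the inductive formulation keeps the dependence on $n$ transparent and avoids binomial gymnastics.
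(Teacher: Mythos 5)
The paper does not supply its own proof of this lemma; it simply cites Korobov's book. So there is no in-text argument to compare against. Your proposal is nevertheless correct and is essentially the standard proof of the generalized divisor bound.

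Both halves check out. In the base case, the prime-by-prime split is sound: for $p \geqslant 2^{1/\varepsilon}$ one has $p^{\varepsilon} \geqslant 2$, and $(a+1)/2^a \leqslant 1$ for all integers $a \geqslant 0$; for the finitely many primes $p < 2^{1/\varepsilon}$, each factor $(a+1)/p^{\varepsilon a}$ tends to $0$ as $a \to \infty$ and hence attains a finite maximum depending only on $p$ and $\varepsilon$, so the product over those primes is a constant $C(\varepsilon)$. In the inductive step, the Dirichlet convolution identity $\tau_n = \tau_{n-1} * 1$, i.e. $\tau_n(M) = \sum_{d \mid M} \tau_{n-1}(M/d)$, is correct (condition on the last factor), and the halving of $\varepsilon$ cleanly closes the induction with a constant $C(\varepsilon,n) = C(\varepsilon/2, n-1)\,C(\varepsilon/2)$. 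One minor observation: you could obtain a slightly cleaner constant by bounding $\tau_{n-1}(M/d) \leqslant C(\varepsilon,n-1)(M/d)^{\varepsilon}$ and then summing $\sum_{d\mid M} d^{-\varepsilon} \leqslant \tau(M)$ rather than dropping the $d$-dependence immediately, but this is cosmetic. The closed form $\tau_n(M) = \prod_j \binom{a_j+n-1}{n-1}$ you mention as an alternative is indeed the route closest to Korobov's own treatment, but the inductive argument you chose is equally rigorous and, as you say, keeps the $n$-dependence of the constant transparent.
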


\subsection{Exponential sums: Vinogradov's method}
Next, we recall the Vinogradov's Mean Value Theorem. It is obvious that $|S|$ is independent of $\alpha_{0}$. We put $\alpha=(\alpha_{1},\cdots,\alpha_{b})\in\mathbb{T}^{b}$ and reset $P(x;\alpha)=\sum_{j=1}^{b}\alpha_{j}x^{j}$. Denote
\begin{equation*}\label{Jdef1}
	J_{b}(N;\rho)=\int_{\mathbb{T}^{b}} |S(\alpha)|^{2\rho} \ \mathrm{d}\alpha.
\end{equation*}
We see that $J_{b}(N;\rho)$ is the number of solutions of the systems
\begin{equation*}\label{Jdef2}
	\begin{split}
		&m_{1}+\cdots+m_{\rho}=n_{1}+\cdots+n_{\rho},\\
		&m_{1}^{2}+\cdots+m_{\rho}^{2}=n_{1}^{2}+\cdots+n_{\rho}^{2},\\
		&\qquad \qquad \cdots\\
		&m_{1}^{b}+\cdots+m_{\rho}^{b}=n_{1}^{b}+\cdots+n_{\rho}^{b},
	\end{split}
\end{equation*}
where $1\leqslant m_{j},n_{j}\leqslant N$.

\begin{lemma}[Vinogradov's Mean Value Theorem]\cite[Theorem 1.1]{BDG16}\label{BDG}
	For each $\rho \geqslant 1$ and $b,N\geqslant 2$, the following upper bound holds:
	\begin{equation*}
		J_{b}(N;\rho)\leqslant C(\varepsilon, b, \rho) (N^{\rho+\varepsilon}+N^{2\rho-\frac{b(b+1)}{2}+\varepsilon} ).
	\end{equation*}
\end{lemma}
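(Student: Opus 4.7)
The plan is to translate the counting problem into an $L^{2\rho}$ bound on an exponential sum and then invoke the sharp $\ell^{2}$-decoupling inequality for the moment curve. By orthogonality on $\mathbb{T}^{b}$ we have
$$J_{b}(N;\rho) = \int_{\mathbb{T}^{b}} |S(\alpha)|^{2\rho}\,\mathrm{d}\alpha,$$
and after the affine rescaling $\alpha_{j}\mapsto N^{-j}\alpha_{j}$ the trigonometric sum $S$ becomes, up to errors controlled via a smooth partition of unity, the extension operator $E_{[0,1]}\mathbf{1}$ associated with the moment curve $\gamma(t)=(t,t^{2},\dots,t^{b})$, sampled on a ball of radius $R=N^{b}$ in $\mathbb{R}^{b}$.

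The central step is to establish the Bourgain--Demeter--Guth decoupling inequality at the critical exponent $p=b(b+1)$: for any $g$ supported on $[0,1]$ and any $R\geqslant 1$,
$$\|E_{[0,1]}g\|_{L^{p}(B_{R})} \leqslant C_{\varepsilon}R^{\varepsilon}\bigg(\sum_{\theta}\|E_{\theta}g\|_{L^{p}(w_{B_{R}})}^{2}\bigg)^{1/2},$$
where the sum runs over an essentially disjoint cover of $[0,1]$ by intervals of length $R^{-1/b}$. The strategy for this inequality is a multi-scale induction on $R$: at each scale, a broad/narrow decomposition separates tuples of transverse caps from nearly parallel ones. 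The broad contribution is controlled by a multilinear Kakeya / Brascamp--Lieb inequality adapted to the tangent $k$-planes of $\gamma$, while the narrow contribution is handled by an outer induction on $b$ via lower-dimensional decoupling for a truncated moment curve.

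With the decoupling inequality in hand, Lemma \ref{BDG} follows by taking $g\equiv\mathbf{1}$, computing each cap-wise norm $\|E_{\theta}\mathbf{1}\|_{L^{p}}$ by direct integration, and summing in $\ell^{2}$ over the $\sim N$ caps at the physical scale $R=N^{b}$. The exponents $p=b(b+1)$ and $2\rho - b(b+1)/2$ appearing in the statement emerge from balancing the cap count against the cap-wise norm at this critical $p$; the diagonal term $N^{\rho+\varepsilon}$ is recovered separately from the contribution of trivially paired solutions $\{m_{j}\}=\{n_{j}\}$, or equivalently by interpolation with the easy bound $J_{b}(N;1)=N$.

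The main obstacle, which is essentially the content of \cite{BDG16}, is the decoupling inequality itself. Within that argument the technical heart is the sharp endpoint multilinear Kakeya estimate, which requires either Guth's polynomial partitioning or the Bennett--Carbery--Tao resolution of Brascamp--Lieb, combined with an efficient induction-on-scales that accumulates only $R^{\varepsilon}$ loss overall. An alternative route would be Wooley's efficient congruencing method, which obtains the same bound using nested $p$-adic Vinogradov systems in place of harmonic analysis; the two approaches are of comparable depth, but decoupling cleanly isolates the transversality input from the combinatorial structure of the problem.
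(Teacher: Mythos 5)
The paper does not prove Lemma~\ref{BDG} at all: it is quoted verbatim from Bourgain--Demeter--Guth as a black box (``\cite[Theorem 1.1]{BDG16}''), with no argument supplied. So there is nothing in the paper against which to compare step-by-step.

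Your proposal is a correct high-level account of how that cited theorem is actually proved. You correctly identify the standard reduction (orthogonality on $\mathbb{T}^{b}$, rescaling so that $S$ becomes the moment-curve extension operator sampled on a ball of radius $R=N^{b}$, caps of length $R^{-1/b}$), the central $\ell^{2}$-decoupling inequality at the critical exponent $p=b(b+1)$, the broad/narrow decomposition with multilinear Kakeya / Brascamp--Lieb controlling the broad part and induction on dimension controlling the narrow part, and finally the derivation of the two-term bound $N^{\rho+\varepsilon}+N^{2\rho-b(b+1)/2+\varepsilon}$ from the critical-exponent estimate together with interpolation/trivial bounds. You also correctly flag Wooley's nested efficient congruencing as an independent route to the same mean-value bound. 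However, as you yourself acknowledge, the technical heart --- the decoupling theorem --- is precisely the content of \cite{BDG16}, so your ``proof'' is in the end a reduction to the very same citation the paper makes; the standard-but-nontrivial transfer from the continuous decoupling inequality back to the discrete exponential-sum mean value (handling the cutoffs, weights, and the passage from the $\delta$-neighborhood of the curve to the lattice sum) is also left implicit. That is not a flaw in your understanding --- a theorem of this depth cannot be reproved in a paragraph --- but it means your proposal and the paper differ only in how much of the BDG machinery they choose to recount before deferring to the reference. Given that, the honest summary is: same logical status (cite BDG), more exposition on your side.
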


\begin{lemma}\cite[pages 79--81]{hughbook}\label{S2b}
	For any $\rho \geqslant 1$ and $b\geqslant 3$,
	\begin{equation*}
		|S|^{2\rho}\lesssim N^{\frac{(b-1)(b-2)}{2}-1} J_{b-1}(3N;\rho) \sum_{|h|\leqslant 2\rho b N^{b-1}} \min \bigg(N, \frac{1}{\| h\alpha_{b}\|_{\mathbb{T}}}\bigg).
	\end{equation*}
\end{lemma}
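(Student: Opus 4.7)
The plan is to follow the classical Vinogradov--Hua argument (cf. \cite{hughbook}), which extracts the dependence on the leading coefficient $\alpha_b$ via a shifting (translation) technique, while controlling the lower-order contribution through the mean value theorem $J_{b-1}$. Writing $e(t) = \mathrm{e}^{2\pi\mathrm{i} t}$, I would first expand
\begin{equation*}
|S|^{2\rho} = \sum_{\mathbf{n}, \mathbf{m} \in \{1,\ldots,N\}^{\rho}} e\!\left( \sum_{k=1}^{b} \alpha_k \bigl(\sigma_k(\mathbf{n}) - \sigma_k(\mathbf{m})\bigr) \right),
\end{equation*}
with $\sigma_k(\mathbf{n}) := \sum_{j=1}^{\rho} n_j^k$, and regroup according to the top power-sum difference $h = \sigma_b(\mathbf{n}) - \sigma_b(\mathbf{m})$. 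The goal is to decouple the $\alpha_b$-phase from the lower-order phases, and then average over the $(b-1)$-tuple of lower-moment differences using $J_{b-1}$.

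The second step is Vinogradov's shifting trick: for each integer shift $y$ in a range of size $\asymp N$, the substitution $n_j \mapsto n_j + y$ expands via $(n_j + y)^k = \sum_i \binom{k}{i} n_j^i y^{k-i}$, leaving $\alpha_b$ attached to $\sum_j n_j^b$ while perturbing the lower-order coefficients by polynomials in $y$. Averaging over $y$ and applying H\"older's inequality replaces the lower-order Fourier data by the mean-value count $J_{b-1}(3N;\rho)$; the factor $3N$ absorbs boundary terms from $n_j + y$ leaving $[1,N]$. A bonus of the shift is that the effective range of $h$ contracts to $|h| \lesssim \rho b N^{b-1}$ rather than the naive $|h| \lesssim \rho N^b$, since most of the $b$-th moment difference is captured by a polynomial in $y$ whose coefficients involve only the lower moments.

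The prefactor $N^{(b-1)(b-2)/2 - 1}$ emerges from a volume count: the $(b-2)$-tuple $(h_1, \ldots, h_{b-2})$ lives in a box of volume $\asymp N^{1+2+\cdots+(b-2)} = N^{(b-1)(b-2)/2}$, and one factor of $N$ is saved by Cauchy--Schwarz against $J_{b-1}(3N;\rho)$. I expect the main difficulty to lie in organizing the shifting and Cauchy--Schwarz steps so that the $(b-1)$-variable mean value theorem is applied cleanly and boundary contributions do not inflate the exponent; once that bookkeeping is correct, the final univariate sum $\sum_{|h| \leqslant 2\rho b N^{b-1}} \min\bigl(N, 1/\|h\alpha_b\|_{\mathbb{T}}\bigr)$ falls out of the remaining $\alpha_b$-dependent Fourier sum.
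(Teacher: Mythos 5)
Your high-level outline (Weyl differencing, Vinogradov's shift, $J_{b-1}$ mean value theorem, Cauchy--Schwarz over lower power sums) is the right family of ideas and matches the classical Vinogradov--Hua argument the paper follows, but two of the key steps are misidentified, and followed literally the ranges and exponents will not come out.

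First, the parameter $h$ appearing in the final sum $\sum_{|h|\leqslant 2\rho b N^{b-1}}\min(N, 1/\|h\alpha_b\|_{\mathbb{T}})$ is \emph{not} the top power-sum difference $\sigma_b(\mathbf{n})-\sigma_b(\mathbf{m})$. If you literally regroup by that quantity, $h$ ranges over $\asymp \rho N^{b}$ values, and nothing in the subsequent shifting ``contracts'' that range; the heuristic you offer (``most of the $b$-th moment difference is captured by a polynomial in $y$'') does not give a step you can carry out. The actual order of operations is: first apply Cauchy--Schwarz with respect to the $(b-2)$-tuple $(s_1,\dots,s_{b-2})$ of \emph{lower} power sums, which produces the prefactor $\rho^{b-2}N^{(b-1)(b-2)/2}$ (your volume count for this factor is correct) and restricts to pairs $(\mathbf{m},\mathbf{n})$ whose first $b-2$ power sums agree. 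In the surviving double sum, only the $\alpha_{b-1}$- and $\alpha_b$-phases remain. Now translate by $m_1$, writing $m_i = m_1 + u_i$, $n_i = m_1 + v_i$; because the first $b-2$ moments of $\mathbf{m},\mathbf{n}$ agree, so do those of $\mathbf{u},\mathbf{v}$, and the binomial expansion shows $s_b(\mathbf{u})-s_b(\mathbf{v}) = s_b(\mathbf{m})-s_b(\mathbf{n}) - b\,m_1\,t_{b-1}$, where $t_{b-1}$ is the $(b-1)$-th moment difference. The only $m_1$-dependence is the linear phase $b\,t_{b-1} m_1 \alpha_b$, and summing the geometric series over $m_1\in[1,N]$ yields $\min(N, 1/\|b\,t_{b-1}\alpha_b\|_{\mathbb{T}})$; the remaining $\alpha_b$- and $\alpha_{b-1}$-phases are killed by the triangle inequality. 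The variable $h$ in the lemma is $b\,t_{b-1}$, which runs over $O(\rho b N^{b-1})$ values simply because $t_{b-1}$ is a $(b-1)$-th moment difference of entries bounded by $N$ --- no contraction argument is involved.

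Second, the factor $N^{-1}$ in $N^{(b-1)(b-2)/2 - 1}$ is \emph{not} ``saved by Cauchy--Schwarz against $J_{b-1}$.'' It comes from the translation itself: each shifted solution $(\mathbf{u},\mathbf{v})$ counted by $R_1(h)$ gives rise to $\asymp N$ admissible choices of $m_1$, so $N R_1(h)\leqslant R_2(h)$, where $R_2(h)$ counts solutions in the enlarged box $[1,3N]^{2\rho}$; this enlargement is also the true source of the $3N$ inside $J_{b-1}(3N;\rho)$ (not absorption of boundary terms from $n_j+y$ leaving $[1,N]$, as you wrote). One then uses the circle-method identity for $R_2(h)$, takes absolute values to get $R_2(h)\leqslant R_2(0)=J_{b-1}(3N;\rho)$, and combines. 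Without correctly locating both the origin of the $N^{-1}$ and the identity of $h$, your argument as written does not close.
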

\begin{remark}
	For readers' convenience, we provide the proof of Lemma \ref{S2b} in the Appendix \ref{A1}.
\end{remark}

\subsection{Diophantine condition}
\begin{lemma}\label{keepDC}
	Let $\alpha\in DC(\gamma,\tau)$ and $p/q\in\mathbb{Q}$. There exists $\tilde{\gamma}=\tilde{\gamma}(p,q,\gamma)$ such that $\alpha p/q\in DC(\tilde{\gamma}, \tau)$.
\end{lemma}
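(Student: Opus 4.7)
The plan is a direct unpacking of the Diophantine definition, with no deeper input required. Write $p/q$ in lowest terms with $q>0$, and assume $p\neq 0$ (otherwise $\alpha p/q = 0$ cannot lie in $DC(\tilde{\gamma},\tau)$ for any $\tilde{\gamma}>0$, so the statement is vacuous or must be read as excluding this case). For a fixed $k\in\mathbb{Z}\setminus\{0\}$, let $m=m(k)\in\mathbb{Z}$ be an integer minimizing the distance from $k\alpha p/q$ to $\mathbb{Z}$, so that
\[
\|k\alpha p/q\|_{\mathbb{T}} = |k\alpha p/q - m|.
\]
Multiplying by $q$ yields the identity
\[
q\,\|k\alpha p/q\|_{\mathbb{T}} = |kp\alpha - qm|.
\]

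The key observation is that $qm\in\mathbb{Z}$, so $|kp\alpha - qm|$ is the distance from $kp\alpha$ to a specific integer and therefore bounds $\|kp\alpha\|_{\mathbb{T}}$ from above. Combining this with the Diophantine hypothesis applied to the nonzero integer $kp$ gives
\[
q\,\|k\alpha p/q\|_{\mathbb{T}} \;\geqslant\; \|kp\alpha\|_{\mathbb{T}} \;\geqslant\; \frac{\gamma}{|kp|^{\tau}} \;=\; \frac{\gamma}{|p|^{\tau}\,|k|^{\tau}},
\]
and hence
\[
\|k\alpha p/q\|_{\mathbb{T}} \;\geqslant\; \frac{\gamma}{q\,|p|^{\tau}}\cdot\frac{1}{|k|^{\tau}}.
\]
The choice $\tilde{\gamma} := \gamma/(q|p|^{\tau})$ then gives $\alpha p/q \in DC(\tilde{\gamma},\tau)$, as required.

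There is genuinely no obstacle here: the lemma is just the elementary fact that scaling an irrational by a nonzero rational scales its denominators of best rational approximation by a controlled constant, leaving the Diophantine exponent $\tau$ invariant. The only pitfall worth naming is keeping track of the correct inequality direction: $\|\cdot\|_{\mathbb{T}}$ is an infimum over $\mathbb{Z}$, so plugging in the particular integer $qm$ yields an \emph{upper} bound on $\|kp\alpha\|_{\mathbb{T}}$, which is precisely the direction one needs in order to chain with the Diophantine \emph{lower} bound on the latter.
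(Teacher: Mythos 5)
Your proof is correct. The paper states Lemma~\ref{keepDC} without proof, treating it as an elementary fact, so there is no paper argument to compare against; your direct unpacking of the definition is the standard route. The chain $q\,\|k\alpha p/q\|_{\mathbb{T}} = |kp\alpha - qm| \geqslant \|kp\alpha\|_{\mathbb{T}} \geqslant \gamma/|kp|^{\tau}$ is exactly right, the inequality directions are handled correctly, and your observation that the case $p=0$ must be excluded (since $0 \notin DC(\tilde{\gamma},\tau)$ for any $\tilde{\gamma}>0$) is a legitimate caveat; in the paper's sole application the lemma is invoked with $p=1$, $q=i!$, so the degenerate case never arises.
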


\begin{lemma}\label{qn}
	Suppose that $\alpha\in DC(\gamma,\tau)$. Let $\{p_{n}/q_{n}\}$ be the best approximation of $\alpha$. Then for any $N\in\mathbb{N}$, there exists $q_{n}$ such that
	\begin{equation*}
		(\gamma N)^{\frac{1}{\tau}}<q_{n}\leqslant N.
	\end{equation*}
\end{lemma}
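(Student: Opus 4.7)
The plan is to select $q_n$ to be the largest convergent denominator with $q_n \leqslant N$, and then combine the standard convergent approximation estimate with the Diophantine condition to obtain the lower bound. Since $q_0 = 1 \leqslant N$ and the sequence $\{q_n\}$ is strictly increasing, such a maximal index $n$ exists, and by maximality $q_{n+1} > N$.

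Next, I would invoke the classical inequality for continued fraction convergents,
\begin{equation*}
\|q_n \alpha\|_{\mathbb{T}} \;\leqslant\; |q_n \alpha - p_n| \;<\; \frac{1}{q_{n+1}},
\end{equation*}
which together with $q_{n+1} > N$ gives $\|q_n \alpha\|_{\mathbb{T}} < 1/N$. Applying the hypothesis $\alpha \in DC(\gamma,\tau)$ to the nonzero integer $q_n$ yields
\begin{equation*}
\frac{\gamma}{q_n^{\tau}} \;\leqslant\; \|q_n \alpha\|_{\mathbb{T}} \;<\; \frac{1}{N},
\end{equation*}
and rearranging produces $q_n^\tau > \gamma N$, i.e.\ $q_n > (\gamma N)^{1/\tau}$. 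Combined with the choice $q_n \leqslant N$, this is the claimed bound.

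There is no real obstacle here: the argument is a one-step combination of two standard facts (the convergent approximation inequality and the Diophantine lower bound). The only thing to be careful about is the strictness of $q_{n+1} > N$, which is exactly why we pick $q_n$ to be the \emph{largest} convergent denominator not exceeding $N$ rather than, say, the smallest one exceeding $N$.
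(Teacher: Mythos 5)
Your proof is correct and takes essentially the same approach as the paper: both pick the largest convergent denominator $q_n \leqslant N$, use $\|q_n\alpha\|_{\mathbb{T}} < 1/q_{n+1}$ together with $q_{n+1} > N$, and combine with the Diophantine lower bound $\gamma/q_n^\tau \leqslant \|q_n\alpha\|_{\mathbb{T}}$ to obtain $q_n > (\gamma N)^{1/\tau}$.
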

\begin{proof}
	On the one hand, for any $N\in\mathbb{N}$, there exist $q_{n}$ and $q_{n+1}$ such that
	\begin{equation*}
		q_{n}\leqslant N<q_{n+1}.
	\end{equation*}
	On the other hand, since $\alpha\in DC(\gamma, \tau)$, then
	\begin{equation*}
		\frac{\gamma}{q_{n}^{\tau}}\leqslant \|q_{n}\alpha\|_{\mathbb{T}} < \frac{1}{q_{n+1}},
	\end{equation*}
	which gives 
	\begin{equation*}
		q_{n}> (\gamma q_{n+1})^{\frac{1}{\tau}}> (\gamma N)^{\frac{1}{\tau}}.
	\end{equation*}
\end{proof}

The following estimate of the sum will be used frequently.
\begin{lemma}\label{dcweyl}
	Suppose that $\alpha\in DC(\gamma, \tau)$. Then
	\begin{equation*}
		\sum_{k=1}^{H}\min\bigg(N,\frac{1}{\|k\alpha\|_{\mathbb{T}}}\bigg)\leqslant \gamma^{-\frac{1}{\tau}} HN^{1-\frac{1}{\tau}} +H\log N+N+N\log N.
	\end{equation*}
\end{lemma}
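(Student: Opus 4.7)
The plan is the classical block-decomposition argument based on a continued fraction convergent of $\alpha$. First I invoke Lemma \ref{qn} to select a denominator $q = q_n$ of a convergent $p_n/q_n$ of $\alpha$ satisfying $(\gamma N)^{1/\tau} < q \leq N$; since $p_n/q_n$ is a convergent, $\gcd(p_n, q) = 1$ and $|\alpha - p_n/q| < 1/(q q_{n+1}) < 1/q^2$. Then I partition $\{1, \ldots, H\}$ into at most $\lceil H/q \rceil$ consecutive blocks, each of length at most $q$.

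The technical heart of the proof is the per-block estimate
\begin{equation*}
\sum_{k \in B} \min\left(N, \frac{1}{\|k\alpha\|_{\mathbb{T}}}\right) \lesssim N + q\log q
\end{equation*}
for each block $B$ of length at most $q$. To prove it, write $k = mq + r$ with $r \in \{1, \ldots, q\}$; then $k\alpha = mq\alpha + rp_n/q + r(\alpha - p_n/q)$ with $|r(\alpha - p_n/q)| \leq 1/q$. Because $\gcd(p_n, q) = 1$, the residues $\{r p_n \bmod q : 1 \leq r \leq q\}$ permute $\{0, 1, \ldots, q-1\}$, so the set $\{k\alpha \bmod 1 : k \in B\}$ is a $1/q$-perturbation of $q$ equally spaced points with gap $1/q$, translated by $mq\alpha$. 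Consequently only $O(1)$ values of $\|k\alpha\|_{\mathbb{T}}$ fall in any band $[j/q, (j+1)/q]$ with $1 \leq j \leq \lfloor q/2\rfloor$, and at most one value falls in $[0, 1/q)$. The innermost band contributes at most $N$ via the $\min$, while summing $\sum_{j=1}^{\lfloor q/2\rfloor} q/j \lesssim q\log q$ controls the remaining bands.

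Summing the per-block bound over all $\lceil H/q \rceil \leq H/q + 1$ blocks yields
\begin{equation*}
\sum_{k=1}^{H} \min\left(N, \frac{1}{\|k\alpha\|_{\mathbb{T}}}\right) \lesssim \left(\frac{H}{q} + 1\right)\bigl(N + q\log q\bigr) = \frac{HN}{q} + H\log q + N + q\log q.
\end{equation*}
The Diophantine lower bound $q > (\gamma N)^{1/\tau}$ then converts $HN/q$ into $\gamma^{-1/\tau} H N^{1 - 1/\tau}$, while $q \leq N$ gives $H\log q \leq H\log N$ and $q\log q \leq N\log N$. Collecting the four pieces reproduces the claimed inequality.

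I expect the most delicate point to be tracking constants in the per-block estimate: one must carefully verify that the $1/q$ perturbation of the equispaced points truly limits the count in each dyadic band to an absolute constant, and handle the innermost band near $0$ sharply enough to obtain the exact coefficient $\gamma^{-1/\tau}$ in the first term rather than an inflated constant. Everything after the per-block bound is bookkeeping and routine algebra.
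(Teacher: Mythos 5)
Your proof is essentially the same as the paper's: pick a convergent denominator $q$ via Lemma~\ref{qn}, apply the classical block estimate $\sum_{k\leqslant H}\min(N,1/\|k\alpha\|_{\T}) \leqslant HN/q + H\log q + N + q\log q$ valid when $|\alpha-p/q|\leqslant 1/q^2$, and substitute $q>(\gamma N)^{1/\tau}$ together with $q\leqslant N$. The only difference is that the paper cites this block estimate directly from \cite[page 41]{hughbook}, where it comes with constant $1$ (which is exactly what lets the coefficient $\gamma^{-1/\tau}$ drop out cleanly), whereas you re-derive it via the residue-permutation argument and accumulate an unspecified $\lesssim$-constant along the way; the constant-tracking concern you flagged at the end is real, and is precisely what quoting the sharp reference form sidesteps.
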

\begin{proof}
	Let us first recall the following classic estimate (c.f. \cite[page 41]{hughbook}).
	If $|\alpha-p/q|\leqslant 1/q^{2}$, then
	\begin{equation}\label{weyl}
		\sum_{k=1}^{H}\min\bigg(N,\frac{1}{\|k\alpha\|_{\mathbb{T}}}\bigg)\leqslant \frac{HN}{q}+H\log q+N+q\log q.
	\end{equation}
	Now let $\{p_{n}/q_{n}\}$ be the best approximation of $\alpha$. It is well-known that
	\begin{equation*}
		\|q_{n}\alpha\|_{\mathbb{T}} <\frac{1}{q_{n+1}}<\frac{1}{q_{n}}, \ \text{for any} \ n\geqslant 1,
	\end{equation*}
	and thus $|\alpha-p_{n}/q_{n}|\leqslant 1/q_{n}^{2}$. Then the proof of Lemma \ref{dcweyl} is finished by combining  \eqref{weyl} with Lemma \ref{qn}.
\end{proof}

\subsection{Semi-algebraic set}

\begin{definition}[Semi-algebraic set]
	We say $\mathcal{S}\subseteq \mathbb{R}^{b}$ is a semi-algebraic set if it is a finite union of sets defined by a finite number of polynomial inequalities. More precisely, let $\{P_{1}, P_{2}, \cdots, P_{s}\}$ be a family of real polynomials to the variables $x=(x_{1}, x_{2}, \cdots, x_{b})$ with $\deg(P_{i})\leqslant d$ for $i=1,2,\cdots, s$. A (closed) semi-algebraic set $\mathcal{S}$ is given by the expression
	\begin{equation}\label{sas}
		\mathcal{S}=\cup_{j} \cap_{\ell\in \mathcal{L}_{j}} \{x\in \mathbb{R}^{b}: P_{\ell}(x) \ \varsigma_{j\ell} \ 0\},
	\end{equation}
	where $\mathcal{L}_{j}\subseteq \{1,2,\cdots, s\}$ and $\varsigma_{j\ell}\in \{\geqslant, \leqslant, =\}$. Then we say that the degree of $\mathcal{S}$, denoted by $\deg(\mathcal{S})$, is at most $sd$. In fact, $\deg(\mathcal{S})$ means the smallest $sd$ overall representation as in \eqref{sas}.
\end{definition}
\begin{lemma}\cite[Corollary 9.6]{Bou05}\label{covers}
	Let $\mathcal{S}\subseteq [0,1]^{b}$ be a semi-algebraic set of degree $B$. Let $\epsilon>0$ be a small number and $\mathrm{Leb}(\mathcal{S})\leqslant \epsilon^{b}$. Then $\mathcal{S}$ can be covered by a family of $\epsilon$-balls with total number less than $B^{C(b)} \epsilon^{1-b}$.
\end{lemma}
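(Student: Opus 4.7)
The plan is to prove the lemma by induction on $b$, strengthening the statement so that the intermediate steps close. The inductive hypothesis I would carry through is
\[
N_{\epsilon}(\mathcal{S}) \leq B^{C(b)}\bigl(\epsilon^{1-b} + \epsilon^{-b}\,\mathrm{Leb}(\mathcal{S})\bigr),
\]
where $N_{\epsilon}(\mathcal{S})$ denotes the minimum number of Euclidean $\epsilon$-balls needed to cover $\mathcal{S}$; specializing to $\mathrm{Leb}(\mathcal{S}) \leq \epsilon^{b}$ yields Lemma~\ref{covers} directly. The base case $b=1$ is immediate: a semi-algebraic subset of $[0,1]$ of degree $B$ consists of at most $O(B)$ intervals and is therefore covered by $O(B) + \mathrm{Leb}(\mathcal{S})/\epsilon$ intervals of length $\epsilon$.

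For the inductive step I would partition $[0,1]$ into $M = \lceil \epsilon^{-1}\rceil$ intervals $I_{j}$ of length $\epsilon$ and analyze the slabs $\mathcal{S}_{j} := \mathcal{S} \cap (I_{j} \times [0,1]^{b-1})$ separately. Each projection $\pi_{j}(\mathcal{S}_{j}) \subseteq [0,1]^{b-1}$ is semi-algebraic of degree $B^{O(1)}$ by the Tarski--Seidenberg projection theorem, and an $\epsilon$-cover of $\pi_{j}(\mathcal{S}_{j})$ in $\mathbb{R}^{b-1}$ lifts, after taking the Cartesian product with $I_{j}$, to an $O(\epsilon)$-cover of $\mathcal{S}_{j}$ in $\mathbb{R}^{b}$. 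Applying the inductive hypothesis to each projection and summing over $j$, the combinatorial term contributes $M \cdot B^{C(b-1)} \epsilon^{2-b} \leq B^{C(b)} \epsilon^{1-b}$; the measure term is handled by splitting the slab indices according to whether $\mathrm{Leb}_{b-1}(\pi_{j}(\mathcal{S}_{j}))$ is comparable to $\mathrm{Leb}(\mathcal{S}_{j})/\epsilon$ or substantially larger. On the former slabs Fubini collapses the sum into the desired $\epsilon^{-b}\,\mathrm{Leb}(\mathcal{S})$ term; the latter must be treated separately via a cell-decomposition argument.

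The main obstacle is precisely this last step: projections of semi-algebraic sets do not preserve Lebesgue measure, so the naive combination of Fubini with the inductive projection bound leaves a gap on ``tall-and-thin'' slabs where the set is nearly vertical over a large base. Resolving this requires a genuine semi-algebraic structure theorem, such as cylindrical algebraic decomposition or a Yomdin--Gromov-type $C^{1}$-parameterization, to decompose $\mathcal{S}$ into pieces on which the relevant projection is bi-Lipschitz with constants controlled polynomially in $B$. Once such a decomposition is available, the $b$-dimensional measure of a slab dominates the $(b-1)$-dimensional measure of its projection up to the expected factor of $\epsilon^{-1}$, the induction closes, and the polynomial blow-up of degrees across the recursion is absorbed into the $B^{C(b)}$ prefactor, matching \cite[Corollary~9.6]{Bou05}.
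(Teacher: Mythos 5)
This lemma is taken verbatim from \cite[Corollary 9.6]{Bou05}; the paper you are working from does not give a proof, so there is no internal argument to compare against. I will therefore evaluate the proposal on its own merits and against Bourgain's treatment.

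Your inductive set-up is reasonable, and the strengthened hypothesis
$N_{\epsilon}(\mathcal{S}) \leq B^{C(b)}\bigl(\epsilon^{1-b} + \epsilon^{-b}\mathrm{Leb}(\mathcal{S})\bigr)$
is the natural thing to carry through a dimensional recursion. The base case is fine (a semi-algebraic subset of $[0,1]$ of degree $B$ is a union of at most $O(B)$ intervals, which follows from the sign-pattern count for $s$ univariate polynomials of degree $\leq d$), and the Tarski--Seidenberg degree control for projections is available in a quantitative form. The combinatorial term $M\cdot B^{C(b-1)}\epsilon^{2-b}\lesssim B^{C(b)}\epsilon^{1-b}$ is also handled correctly.

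However, the proposal does not close. You correctly identify the obstruction --- on a slab $\mathcal{S}_{j}=\mathcal{S}\cap(I_{j}\times[0,1]^{b-1})$ the $(b-1)$-dimensional measure of the projection $\pi_{j}(\mathcal{S}_{j})$ may vastly exceed $\mathrm{Leb}(\mathcal{S}_{j})/\epsilon$ (take $\mathcal{S}$ a thin neighborhood of a hyperplane parallel to $\{x_{1}=0\}$: one slab, projection of full measure) --- and then defer the resolution to an unstated ``cell-decomposition argument'' via cylindrical algebraic decomposition or a Yomdin--Gromov $C^{1}$-parameterization. That deferred step is precisely where the entire difficulty lives. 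To make it work you would need (i) a decomposition of $\mathcal{S}$ into $B^{O(1)}$ semi-algebraic cells of degree $B^{O(1)}$, each of which is either a graph over a coordinate subspace or a cylinder over a lower-dimensional cell, with (ii) bi-Lipschitz constants of the graphing maps controlled \emph{polynomially} in $B$, uniformly in $b$. Statement (i) is classical CAD, but statement (ii) is exactly the quantitative Yomdin--Gromov / Basu--Pollack--Roy input; it is nontrivial, it requires tracking how the degree of the resultants and subresultants grows through the CAD recursion, and none of it is carried out in the proposal. As written, the argument stops at the point where Bourgain's lemma actually starts.

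For contrast, the argument underlying \cite[Corollary 9.6]{Bou05} does not run a dimensional induction on projections of $\mathcal{S}$ itself. The engine is the one-dimensional section bound: every affine line meets a semi-algebraic set of degree $B$ in at most $O(B)$ intervals, because the restriction of the $s$ defining polynomials to the line gives a sign pattern that changes at most $sd=B$ times. One then partitions $[0,1]^{b}$ into $\epsilon$-cubes, separates the $\leq 2\epsilon^{-b}\mathrm{Leb}(\mathcal{S})$ ``mostly full'' cubes (none exist once $\mathrm{Leb}(\mathcal{S})\leq\epsilon^{b}$), and counts the remaining ``boundary'' cubes by shooting $O(b\,\epsilon^{1-b})$ axis-parallel grid lines and using the interval bound per line, together with the quantitative estimates on the Vitushkin variations / number of connected components of fibers that Bourgain develops in the preceding lemmas of \cite[Section 9]{Bou05}. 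This route replaces the measure-of-projection problem you hit with a count of interval endpoints along lines, which is directly controlled by the degree and avoids the tall-and-thin slab pathology entirely. Your proposal would become correct if the CAD/Yomdin--Gromov step were actually supplied with polynomial-in-$B$ constants, but in its current form there is a genuine gap.
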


\subsection{Fourier analysis}

Let $\chi_{\epsilon}(\cdot)$ be the characteristic function of the ball in $\mathbb{T}^{b}$ of radius $\epsilon$ centered at $0$. That is,
\begin{equation*}
	\chi_{\epsilon}(x)=
	\begin{cases}
		1, & \|x\|_{\mathbb{T}^{b}}<\epsilon, \\
		0, & \text{others}.                   
	\end{cases}
\end{equation*}
The following result is a part of the calculation in \cite{JPo}. For convenience, we state it as a lemma and provide the proof.
\begin{lemma}\cite[pages 189--190]{JPo}\label{fejer}
	Suppose that $\{x_{n}\}_{n=1}^{N}\subseteq \mathbb{T}^{b}$. Let $R=[\epsilon^{-1}/10]$. Then
	\begin{equation*}
		\sum_{n=1}^{N}\chi_{\epsilon}(x_{n})\leqslant C(b) R^{-b}\sum_{\|\mathbf{k}\| <R} \bigg| \sum_{n=1}^{N}\mathrm{e}^{2\pi \mathrm{i} \langle \mathbf{k}, x_{n}\rangle } \bigg|.
	\end{equation*}
\end{lemma}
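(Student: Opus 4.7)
The plan is to construct a nonnegative trigonometric polynomial that dominates $\chi_{\epsilon}$ pointwise on the $\epsilon$-ball and whose Fourier support is contained in the cube $\{\|\mathbf{k}\|<R\}$. The natural candidate is a tensor product of Fej\'er kernels: set $F_{R}(t) = R^{-1}|\sin(\pi R t)/\sin(\pi t)|^{2}$ and $K(\mathbf{x}) = \prod_{j=1}^{b} F_{R}(x_{j})$. Since $F_{R}(t) = \sum_{|k|<R}(1-|k|/R)\mathrm{e}^{2\pi\mathrm{i}kt}$, the kernel $K$ is nonnegative and expands as
$$K(\mathbf{x}) = \sum_{\|\mathbf{k}\|<R}\widehat{K}(\mathbf{k})\,\mathrm{e}^{2\pi\mathrm{i}\langle\mathbf{k},\mathbf{x}\rangle}, \qquad \widehat{K}(\mathbf{k}) = \prod_{j=1}^{b}\Bigl(1-\frac{|k_{j}|}{R}\Bigr),$$
with each Fourier coefficient lying in $[0,1]$.

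The first step is to establish the pointwise lower bound $K(\mathbf{x}) \geqslant c(b)R^{b}$ whenever $\|\mathbf{x}\|_{\mathbb{T}^{b}}<\epsilon$. With $R = [\epsilon^{-1}/10]$, each coordinate $x_{j}$ admits a representative with $|x_{j}| < \epsilon$, so $|\pi R x_{j}| \leqslant \pi/10$; then $|\sin(\pi R x_{j})|$ is bounded below by a constant multiple of $\pi R|x_{j}|$ while $|\sin(\pi x_{j})|$ is bounded above by $\pi |x_{j}|$, so $F_{R}(x_{j}) \geqslant c R$ for some universal $c>0$. Multiplying over $j=1,\ldots,b$ gives the claimed bound on the $\epsilon$-ball.

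Given this, the second and final step is a direct Fourier interchange. On the $\epsilon$-ball, $c^{b}R^{b}\chi_{\epsilon}(\mathbf{x}) \leqslant K(\mathbf{x})$, and since $K \geqslant 0$ everywhere we obtain
$$\sum_{n=1}^{N}\chi_{\epsilon}(x_{n}) \leqslant \frac{1}{c^{b}R^{b}}\sum_{n=1}^{N} K(x_{n}) = \frac{1}{c^{b}R^{b}}\sum_{\|\mathbf{k}\|<R}\widehat{K}(\mathbf{k}) \sum_{n=1}^{N}\mathrm{e}^{2\pi\mathrm{i}\langle\mathbf{k},x_{n}\rangle}.$$
Applying the triangle inequality together with $|\widehat{K}(\mathbf{k})|\leqslant 1$ yields the stated bound with $C(b)=c^{-b}$.

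No serious obstacle stands in the way: the conclusion is the classical Fej\'er majorant construction, and the only quantitative ingredient is the elementary sine-ratio estimate, which is routine once the explicit relation $R\approx (10\epsilon)^{-1}$ is exploited to keep $\pi R x_{j}$ well inside the first bump of the numerator.
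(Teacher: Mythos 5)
Your proposal is correct and follows essentially the same route as the paper: both use the tensor-product Fej\'er kernel $\prod_j F_R(x_j)$, establish the pointwise lower bound $\gtrsim R^b$ on the $\epsilon$-ball (the paper records this as $R/2 \leqslant F(x_j) \leqslant 2R$; you derive it from the sine-ratio estimate, which is the standard justification), and then expand in Fourier series and apply the triangle inequality with $|\widehat{K}(\mathbf{k})| \leqslant 1$. The only cosmetic point is that at $x_j = 0$ the ratio $\sin(\pi R x_j)/\sin(\pi x_j)$ is a removable singularity with value $R$, which is already covered by your bound, so there is no gap.
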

\begin{remark}
    In this lemma and the rest of this paper, $[x]$ denotes the integer part of $x\in \R$. 
\end{remark}
\begin{proof}
	Let $F(\cdot)$ be the usual Fej\'er kernel on $\mathbb{R}$:
	\begin{equation*}
		F(x)=\frac{1}{R}\bigg(\frac{\sin \pi Rx}{\sin \pi x}\bigg)^{2}=\sum_{|k|<R}\bigg(1-\frac{|k|}{R}\bigg)\mathrm{e}^{2\pi \mathrm{i}kx}=:\sum_{|k|<R}\widehat{F}(k)\mathrm{e}^{2\pi \mathrm{i}kx}.
	\end{equation*}
	For any $x=(x_{1},\cdots,x_{b})\in\mathbb{T}^{b}$, if $\chi_{\epsilon}(x)=0$, then $\chi_{\epsilon}(x)\leqslant R^{-b}\prod_{j=1}^{b}F(x_{j})$ holds trivially. On the other hand, if $\chi_{\epsilon}(x)=1$, then $R/2 \leqslant F(x_{j})\leqslant 2R$ for every $1\leqslant j\leqslant b$, and we also have $\chi_{\epsilon}(x)\leqslant 2^{b}R^{-b}\prod_{j=1}^{b}F(x_{j})$. Thus
	\begin{equation*}
		\chi_{\epsilon}(x)\lesssim R^{-b}\prod_{j=1}^{b}F(x_{j}), \ \text{for any} \ x\in \mathbb{T}^{b}.
	\end{equation*}
	Set $\mathbf{k}=(k_{1},\cdots, k_{b})\in \mathbb{Z}^{b}$ and let $\widehat{F}(\mathbf{k})=\prod_{j=1}^{b}\widehat{F}(k_{j})$. Then
	\begin{equation*}
		\begin{split}		
			\prod_{j=1}^{b}F(x_{j})&=\prod_{j=1}^{b}\sum_{|k_{j}|<R}\widehat{F}(k_{j})\mathrm{e}^{2\pi \mathrm{i}k_{j}x_{j}}\\
			&=\sum_{\|\mathbf{k}\| <R}\widehat{F}(\mathbf{k}) \mathrm{e}^{2\pi \mathrm{i}\langle \mathbf{k},x\rangle}.
		\end{split}
	\end{equation*}
	Note that $|\widehat{F}(\mathbf{k})|\leqslant \prod_{j=1}^{b}|\widehat{F}(k_{j})|\leqslant 1$, then
	\begin{equation*}
		\sum_{n=1}^{N} \chi_{\epsilon}(x_{n}) \lesssim R^{-b} \sum_{n=1}^{N} \sum_{\|\mathbf{k}\| <R} \widehat{F}(\mathbf{k}) \mathrm{e}^{2\pi \mathrm{i} \langle \mathbf{k}, x_{n} \rangle} \lesssim  R^{-b}\sum_{\|\mathbf{k}\| <R} \bigg| \sum_{n=1}^{N}\mathrm{e}^{2\pi \mathrm{i} \langle \mathbf{k}, x_{n}\rangle } \bigg|.
	\end{equation*}
	This finishes the proof.
\end{proof}

\section{Sublinear bound: Weyl's method}\label{Weyl}

\begin{theorem}\label{wm}
	Let $\mathbf{P}=(P_{1},\cdots, P_{b})$ be a vector $(b\geqslant 2)$ of real polynomials and $\alpha_{i}$ be the leading coefficient of $P_{i}$ for $1\leqslant i\leqslant b$. Suppose $1\leqslant \deg(P_{1})  <\cdots <\deg(P_{b}) \leqslant m$ and $\alpha_{i} \in DC(\gamma, \tau)$ with $\gamma >0,\tau>1$ for $1\leqslant i\leqslant b$.  Let $\mathcal{S}\subseteq [0,1]^{b}$ be a semi-algebraic set of degree $B$ and $\mathrm{Leb}(\mathcal{S})<\eta$. Let $N\in\mathbb{N}$ such that 
	\begin{equation*}
		\log B \lesssim\ \log N< 2^{m-1} \tau \log \frac{1}{\eta}.
	\end{equation*}
	Then for any $0<\varepsilon\leqslant 1$,
	\begin{equation*}
		\#\{1\leqslant n\leqslant N: \mathbf{P}(n)\bmod \mathbb{Z}^{b}\in\mathcal{S}\} \leqslant C(\varepsilon,b,m,\gamma,\tau)B^{C(b)}N^{1-\frac{1}{\tau b 2^{m-1}}+\varepsilon}.
	\end{equation*}
\end{theorem}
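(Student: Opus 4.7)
The plan is to combine the covering lemma for semi-algebraic sets (Lemma \ref{covers}), the Fej\'er-kernel reduction (Lemma \ref{fejer}), Weyl's inequality (Lemma \ref{ddim}), and H\"older's inequality applied to $|S(\mathbf{k})|^{2^{m-1}}$ rather than to $|S(\mathbf{k})|$ directly; this last twist is the source of the improved exponent.

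Choose $\epsilon$ slightly larger than $N^{-1/(\tau b 2^{m-1})}$ so that $\eta\le \epsilon^{b}$, which is possible by the hypothesis $\log N<2^{m-1}\tau\log(1/\eta)$. By Lemma \ref{covers}, $\mathcal{S}$ is covered by at most $B^{C(b)}\epsilon^{1-b}$ balls of radius $\epsilon$, and by Lemma \ref{fejer} the count inside each ball $B_\epsilon(y_\ell)$ is dominated, up to constants, by $R^{-b}\sum_{\|\mathbf{k}\|<R}|S(\mathbf{k})|$ where $R=[10\epsilon^{-1}]$ and $S(\mathbf{k})=\sum_{n=1}^{N}\mathrm{e}^{2\pi\mathrm{i}\langle\mathbf{k},\mathbf{P}(n)-y_\ell\rangle}$ (the phase $y_\ell$ contributes only a unimodular factor per summand, so $|S(\mathbf{k})|$ is independent of $y_\ell$).

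For each $\mathbf{k}\ne 0$, let $i^*=i^*(\mathbf{k})$ be the largest index with $k_{i^*}\ne 0$ and $m^*=\deg(P_{i^*})\le m$; then $Q_{\mathbf{k}}(x):=\sum_i k_iP_i(x)$ has degree $m^*$ and leading coefficient $k_{i^*}\alpha_{i^*}$. Apply Lemma \ref{ddim} (with the role of $b$ played by $m^*$) and use the trivial bound $|S(\mathbf{k})|\le N$ to boost the exponent from $2^{m^*-1}$ up to $2^{m-1}$. Now sum the resulting inequality over $k_{i^*}$ with $1\le|k_{i^*}|\le R$ and over $\mathbf{h}\in[1,N]^{m^*-1}$: by Lemma \ref{sums} applied to the product variable $H':=k_{i^*}h_1\cdots h_{m^*-1}\in[1,RN^{m^*-1}]$, and by Lemmas \ref{keepDC}--\ref{dcweyl} applied to the Diophantine number $m^*!\alpha_{i^*}$, this double sum is $\lesssim RN^{m^*-1/\tau+\varepsilon}$. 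Multiplying by the trivial sum over $(k_1,\ldots,k_{i^*-1})\in[-R,R]^{i^*-1}$ and summing over $i^*=1,\ldots,b$ gives
\begin{equation*}
\sum_{0<\|\mathbf{k}\|<R}|S(\mathbf{k})|^{2^{m-1}}\lesssim R^{b}N^{2^{m-1}-1/\tau+\varepsilon}.
\end{equation*}
H\"older's inequality with exponent $2^{m-1}$ then yields $\sum_{0<\|\mathbf{k}\|<R}|S(\mathbf{k})|\lesssim R^{b}N^{1-\delta+\varepsilon}$ with $\delta=1/(\tau 2^{m-1})$, so the count inside each $\epsilon$-ball is $\lesssim N\epsilon^{b}+N^{1-\delta+\varepsilon}$.

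Multiplying by the number of balls gives $\lesssim B^{C(b)}(N\epsilon+\epsilon^{1-b}N^{1-\delta+\varepsilon})$, and the choice $\epsilon\sim N^{-\delta/b}$ balances the two terms to produce the claimed bound $B^{C(b)}N^{1-1/(\tau b 2^{m-1})+\varepsilon}$; all intermediate $N^\varepsilon$ losses (from $\log B\lesssim\log N$, from Lemma \ref{sums}, and from the lower-order terms in Lemma \ref{dcweyl}) are absorbed by readjusting $\varepsilon$. The main obstacle is the efficient summation of the Weyl bound over $\mathbf{k}$: a naive per-$\mathbf{k}$ application of Lemma \ref{ddim} would produce an extra factor $|k_{i^*}|^{1/2^{m^*-1}}$, which on summation contributes $R^{1/2^{m-1}}$ and degrades the final exponent to $1/(\tau(b2^{m-1}+1))$; pulling $k_{i^*}$ into the product variable $H'$ before applying Lemmas \ref{sums} and \ref{dcweyl}, and averaging $|S(\mathbf{k})|^{2^{m-1}}$ (not $|S(\mathbf{k})|$) before H\"older, is precisely what removes this loss and recovers the target exponent $1/(\tau b2^{m-1})$.
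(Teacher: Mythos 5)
Your proof follows essentially the same route as the paper's: cover $\mathcal{S}$ by $\epsilon$-balls (Lemma~\ref{covers}), pass to exponential sums via the Fej\'er kernel (Lemma~\ref{fejer}), partition $\mathbf{k}\neq 0$ according to the largest nonzero index $i^*$, apply Weyl's bound (Lemma~\ref{ddim}), fold $k_{i^*}$ into the divisor product so that the Diophantine sum estimate (Lemma~\ref{dcweyl}) is applied to $\alpha_{i^*}$ rather than to $k_{i^*}\alpha_{i^*}$, and finish with H\"older. You correctly identify the key point: folding $k_{i^*}$ into the product variable before invoking Lemmas~\ref{sums} and~\ref{dcweyl} is exactly what avoids the $R^{1/2^{m-1}}$ loss and yields the exponent $\tau b 2^{m-1}$. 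The only visible variation is that the paper applies H\"older with the slice-dependent exponent $2^{D_i-1}$ on each $\mathcal{K}^i$, while you first boost every Weyl estimate to the uniform exponent $2^{m-1}$ via the trivial bound $|S|\leq N$ and then apply H\"older once over all nonzero $\mathbf{k}$; this is cosmetic and produces the same exponent.

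One small omission: when $i^*(\mathbf{k})=1$ and $\deg(P_1)=1$, i.e.\ $m^*=1$, Lemma~\ref{ddim} does not apply (it requires degree at least $2$). Those $\mathbf{k}$ must be treated separately via the geometric-series bound $|S(\mathbf{k})|\leq\min\left(N,\|k_1\alpha_1\|_{\mathbb{T}}^{-1}\right)$ followed by Lemma~\ref{dcweyl}, exactly as the paper does in its Case~1. This is a trivial patch that fits the same pattern, after which your argument is complete.
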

\begin{proof}
	Let $\epsilon=N^{-\frac{1}{\tau b 2^{m-1}}}$ and $R=[\epsilon^{-1}/10]$.
	Then
	\begin{equation*}
		\mathrm{Leb}(\mathcal{S})< \eta < N^{-\frac{1}{\tau 2^{m-1}}}=\epsilon^{b}.
	\end{equation*}
	Thus by Lemma \ref{covers}, $\mathcal{S}$ can be covered by at most $B^{C(b)}$ $\epsilon^{1-b}$ many $\epsilon$-balls.
		
	Consider one such ball; without loss of generality, we may assume the ball is centered at 0. Let $\chi_{\epsilon}(\cdot)$ be the characteristic function of the ball in $\mathbb{T}^{b}$ with radius $\epsilon$ centered at $0$. 
	Apply Lemma \ref{fejer} we have
	\begin{equation}\label{domdd}
		\sum_{n=1}^{N} \chi_{\epsilon}(\mathbf{P}(n)\bmod \mathbb{Z}^{b}) 
		\lesssim R^{-b}\sum_{\|\mathbf{k}\| <R} \bigg| \sum_{n=1}^{N}\mathrm{e}^{2\pi \mathrm{i} \langle \mathbf{k}, \mathbf{P}(n)\rangle } \bigg|.
	\end{equation}
	To estimate \eqref{domdd}, we decompose the  square lattice $\{\mathbf{k}: \|\mathbf{k}\| <R\}$ such that
	\begin{equation*}
		\{\mathbf{k}: \|\mathbf{k}\| <R\} = \cup_{i} \mathcal{K}^{i}, \ \text{and}\ \mathcal{K}^{i}\cap \mathcal{K}^{i'}=\emptyset, \ \text{for any}\ i\neq i',
	\end{equation*}
	where
	\begin{equation}\label{K}
		\begin{split}
			&\mathcal{K}^{b}=\{\|\mathbf{k}\| <R: k_{b}\neq 0\},\\
			&\mathcal{K}^{i}=\{\|\mathbf{k}\| <R: k_{i}\neq 0, k_{i'}= 0 \ \text{for any}\ i'>i\}, \ 1\leqslant i<b,\\
			&\mathcal{K}^{0}=\{k_{b}=k_{b-1}=\cdots =k_{1}=0\}.
		\end{split}
	\end{equation}
	Thus it follows from \eqref{domdd} and \eqref{K} that
	\begin{equation}\label{decsum}
		\sum_{n=1}^{N} \chi_{\epsilon}(\mathbf{P}(n)\bmod \mathbb{Z}^{b}) 
		\lesssim NR^{-b}+ R^{-b}\sum_{i=1}^{b} \sum_{\mathbf{k}\in \mathcal{K}^{i}} \bigg| \sum_{n=1}^{N}\mathrm{e}^{2\pi \mathrm{i} \langle \mathbf{k}, \mathbf{P}(n)\rangle } \bigg|.
	\end{equation}
	Denote 
	\begin{equation*}
		S=\sum_{n=1}^{N}\mathrm{e}^{2\pi \mathrm{i} \langle \mathbf{k}, \mathbf{P}(n)\rangle }.
	\end{equation*}
	Let us consider one such $\mathcal{K}^{i}$. Denote $\deg (P_{i}) = D_{i}$. We need to deal with two cases:

	{\bf Case 1: $D_{i}=1$.}
	Obviously, by the monotonicity of $D_{i}$ to $i$ and $1\leqslant D_{i}\leqslant m$, it must holds that $i=1$. Then for $\mathbf{k}\in \mathcal{K}^{1}$,
	\begin{equation*}
		\langle \mathbf{k},\mathbf{P}(n)\rangle =k_{1}\alpha_{1} n+O(1).
	\end{equation*}
	Combining the estimate of the geometric series with Lemma \ref{dcweyl} shows
	\begin{equation}\label{Di1}
		\sum_{\mathbf{k}\in \mathcal{K}^{1}} |S| \leqslant 2\sum_{k_{1}=1}^{R} \min \bigg(N, \frac{1}{\|k_{1}\alpha_{1}\|}\bigg) \lesssim RN^{1-\frac{1}{\tau}}\lesssim R^{b} N^{1-\frac{1}{\tau 2^{m-1}}}.
	\end{equation}

	{\bf Case 2: $D_{i}\geqslant 2$.}
	By the assumption of $\mathbf{P}$ and $\mathbf{k}\in \mathcal{K}^{i}$, we know
	\begin{equation*}
		\langle \mathbf{k},\mathbf{P}(n)\rangle=k_{i}\alpha_{i}n^{D_{i}}+O(n^{D_{i}-1}).
	\end{equation*}
	Since $\# \mathcal{K}^{i} \lesssim R^{i}$, by the H\"older inequality,
	\begin{equation}\label{holderdd}
		\sum_{\mathbf{k}\in \mathcal{K}^{i}}|S| \leqslant R^{i(1-\frac{1}{2^{D_{i}-1}})} \bigg(\sum_{\mathbf{k}\in\mathcal{K}^{i}} |S|^{2^{D_{i}-1}}\bigg)^{1/2^{D_{i}-1}}.
	\end{equation}
	By Lemma \ref{ddim} and Remark \ref{ddim2}, we have
	\begin{equation*}
		|S|^{2^{D_{i}-1}} \lesssim N^{2^{D_{i}-1}-D_{i}} \sum_{h_{1}=1}^{N}\cdots \sum_{h_{D_{i}-1}=1}^{N}\min \bigg(N, \frac{1}{\|(D_{i})!h_{1}\cdots h_{D_{i}-1}k_{i}\alpha_{i}\|_{\mathbb{T}}}\bigg).
	\end{equation*}
	We apply Lemma \ref{sums} to reduce the multi-sums to a single sum: for any $0<\varepsilon\leqslant 1$,
	\begin{equation*}
		|S|^{2^{D_{i}-1}} \lesssim N^{2^{D_{i}-1}-D_{i}+\varepsilon} \sum_{h=1}^{(D_{i})! N^{D_{i}-1}} \min \bigg(N, \frac{1}{\|hk_{i}\alpha_{i}\|_{\mathbb{T}}}\bigg).
	\end{equation*}
	Similarly, combining the above inequality with Lemma \ref{sums} again, so that
	\begin{equation}\label{Sldd}
		\begin{split}	\sum_{\mathbf{k}\in\mathcal{K}^{i}}|S|^{2^{D_{i}-1}}&\lesssim N^{2^{D_{i}-1}-D_{i}+\varepsilon} R^{i-1} \sum_{k_{i}=1}^{R}\sum_{h=1}^{(D_{i})! N^{D_{i}-1}}\min \bigg(N, \frac{1}{\| h k_{i}\alpha_{i}\|_{\mathbb{T}}}\bigg)\\
			&\lesssim N^{2^{D_{i}-1}-D_{i}+\varepsilon} R^{i-1} \sum_{k=1}^{(D_{i})!RN^{D_{i}-1}}\min \bigg(N, \frac{1}{\|k\alpha_{i}\|_{\mathbb{T}}}\bigg).
		\end{split}
	\end{equation}
	Since $\alpha_{i}\in DC(\gamma, \tau)$ with $\tau>1$, we apply Lemma \ref{dcweyl} with $H=(D_{i})!RN^{D_{i}-1}$, 
	\begin{equation}\label{fractiondd}
		\sum_{k=1}^{(D_{i})!RN^{D_{i}-1}}\min \bigg(N, \frac{1}{\|k\alpha_{i}\|_{\mathbb{T}}}\bigg)\lesssim RN^{D_{i}-\frac{1}{\tau}}.
	\end{equation}
	Substituting \eqref{fractiondd} into \eqref{Sldd} shows that
	\begin{equation}\label{Sdd}
		\sum_{\mathbf{k}\in \mathcal{K}^{i}}|S|^{2^{D_{i}-1}}\lesssim  R^{i} N^{2^{D_{i}-1}-\frac{1}{\tau}+\varepsilon}.
	\end{equation}
	Now we combine \eqref{Sdd} with \eqref{holderdd}, one can see that
	\begin{equation}\label{Di2}
		\begin{split}
			\sum_{\mathbf{k}\in \mathcal{K}^{i}}|S| &\lesssim  R^{i(1-\frac{1}{2^{D_{i}-1}})} (R^{i} N^{2^{D_{i}-1}-\frac{1}{\tau}+\varepsilon} )^{\frac{1}{2^{D_{i}-1}}}\\
			&\lesssim R^{i} N^{1-\frac{1}{\tau 2^{D_{i}-1}}+\varepsilon} \lesssim  R^{b}N^{1-\frac{1}{\tau 2^{m-1}}+\varepsilon},  
		\end{split}
	\end{equation}
	where the last step uses the assumption that $D_{i}\leqslant m$.
	 
	Thus, {\bf Case 1} and {\bf Case 2} imply that
	\begin{equation*}
		\sum_{\mathbf{k}\in \mathcal{K}^{i}}|S| \lesssim  R^{b}N^{1-\frac{1}{\tau 2^{m-1}}+\varepsilon}, \ \text{for any}\ 1\leqslant i\leqslant m.
	\end{equation*}
	Combine the above inequality with \eqref{decsum}, we have
	\begin{equation*}
		\begin{split}
			\sum_{n=1}^{N} \chi_{\epsilon}(\mathbf{P}(n)\bmod\mathbb{Z}^{b})&\lesssim NR^{-b}+ R^{-b} \sum_{i=1}^{b}  R^{b} N^{1-\frac{1}{\tau 2^{m-1}}+\varepsilon}\\
			& \lesssim NR^{-b}+N^{1-\frac{1}{\tau 2^{m-1}}+\varepsilon}.
		\end{split}
	\end{equation*}
	According to the definition of $R$, we know $R^{-b}\lesssim N^{-\frac{1}{\tau 2^{m-1}}}$ and thus
	\begin{equation*}
		\sum_{n=1}^{N} \chi_{\epsilon}(\mathbf{P}(n)\bmod\mathbb{Z}^{b})\lesssim N^{1-\frac{1}{\tau 2^{m-1}}+\varepsilon}.
	\end{equation*}
	Since we only need $B^{C(b)}$ $\epsilon^{1-b}$ many $\epsilon$-balls to cover $\mathcal{S}$, we get 
	\begin{equation*}
		\begin{split}
			\#\{1\leqslant n\leqslant N: \mathbf{P}(n)\bmod\mathbb{Z}^{b}\in\mathcal{S}\} &\leqslant B^{C(b)} \epsilon^{1-b} \sum_{n=1}^{N} \chi_{\epsilon}(\mathbf{P}(n)\bmod\mathbb{Z}^{b}) \\
			&\leqslant C(\varepsilon,b,m,\gamma,\tau)B^{C(b)}N^{1-\frac{1}{\tau b 2^{m-1}}+\varepsilon}.
		\end{split}
	\end{equation*}
	This finishes the proof.
\end{proof}

\section{Sublinear bound: Vinogradov's method}\label{Vino}

\begin{theorem}\label{vm}
	Let $\mathbf{P}=(P_{1},\cdots, P_{b})$ be a vector $(b\geqslant 2)$ of real polynomials and $\alpha_{i}$ be the leading coefficient of $P_{i}$ for $1\leqslant i\leqslant b$. Suppose $1\leqslant \deg(P_{1})  <\cdots <\deg(P_{b}) \leqslant m$ and $\alpha_{i} \in DC(\gamma, \tau)$ with $\gamma >0,\tau>1$ for $1\leqslant i\leqslant b$.  Let $\mathcal{S}\subseteq [0,1]^{b}$ be a semi-algebraic set of degree $B$ and $\mathrm{Leb}(\mathcal{S})<\eta$. Let $N\in\mathbb{N}$ such that 
	\begin{equation*}
		\log B \lesssim \log N< m(m-1)\tau \log \frac{1}{\eta}.
	\end{equation*}
	Then for any $0<\varepsilon\leqslant 1$,
	\begin{equation*}
		\#\{1\leqslant n\leqslant N: \mathbf{P}(n)\bmod\mathbb{Z}^{b}\in\mathcal{S}\} \leqslant C(\varepsilon,b,m,\gamma,\tau)B^{C(b)}N^{1-\frac{1}{\tau b m(m-1)}+\varepsilon}.
	\end{equation*}
\end{theorem}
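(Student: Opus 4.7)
\medskip

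\textbf{Proof proposal.} The plan is to follow the same high-level scheme as in Theorem \ref{wm}, replacing the Weyl differencing input (Lemma \ref{ddim}) by the Vinogradov-type input (Lemma \ref{S2b} combined with Lemma \ref{BDG}). Set $\epsilon = N^{-\frac{1}{\tau b m(m-1)}}$ and $R = [\epsilon^{-1}/10]$. The hypothesis $\log N < m(m-1)\tau \log(1/\eta)$ gives $\mathrm{Leb}(\mathcal{S}) < \eta < \epsilon^b$, so Lemma \ref{covers} allows us to cover $\mathcal{S}$ by at most $B^{C(b)}\epsilon^{1-b}$ balls of radius $\epsilon$. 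For each such ball (assumed centered at $0$), Lemma \ref{fejer} reduces the count to the exponential sum
\begin{equation*}
    \sum_{n=1}^N \chi_\epsilon(\mathbf{P}(n) \bmod \Z^b) \lesssim R^{-b}\sum_{\|\mathbf{k}\|<R}|S|, \qquad S = \sum_{n=1}^N \mathrm{e}^{2\pi \mathrm{i}\langle \mathbf{k},\mathbf{P}(n)\rangle}.
\end{equation*}
I then decompose $\{\|\mathbf{k}\|<R\} = \mathcal{K}^0 \sqcup \mathcal{K}^1 \sqcup\cdots\sqcup \mathcal{K}^b$ exactly as in \eqref{K} and treat each stratum $\mathcal{K}^i$ separately. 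The $\mathcal{K}^0$ contribution is the trivial $N R^{-b}$ term; the $\mathcal{K}^1$ stratum (forcing $D_i = 1$) is handled by the geometric sum plus Lemma \ref{dcweyl}, giving $\sum_{\mathbf{k}\in\mathcal{K}^1}|S| \lesssim R N^{1-1/\tau}$, which is dominated by the Case 2 bound below.

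The core of the argument is Case 2 ($D_i \geqslant 2$). Write $d = D_i$ so that $\langle\mathbf{k},\mathbf{P}(n)\rangle = k_i \alpha_i n^d + O(n^{d-1})$ for $\mathbf{k}\in\mathcal{K}^i$. The critical choice is $\rho = d(d-1)/2$: with this $\rho$ the two terms in Lemma \ref{BDG} balance, yielding $J_{d-1}(3N;\rho) \lesssim N^{d(d-1)/2 + \varepsilon}$. Plugging this into Lemma \ref{S2b} gives
\begin{equation*}
    |S|^{2\rho} \lesssim N^{(d-1)^2 - 1 + \varepsilon}\sum_{|h|\leqslant C(d)N^{d-1}} \min\!\bigg(N,\tfrac{1}{\|h k_i \alpha_i\|_\T}\bigg),
\end{equation*}
since $\frac{(d-1)(d-2)}{2} + \rho - 1 = (d-1)^2 - 1$. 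Summing over $\mathbf{k}\in\mathcal{K}^i$ (there are $\lesssim R^{i-1}$ choices of the lower-index coordinates and $R$ choices of $k_i$), the combined inner sum over $(h,k_i)$ collapses via Lemma \ref{sums} to a single sum $\sum_{k=1}^{C R N^{d-1}}\min(N,1/\|k\alpha_i\|_\T)$, to which Lemma \ref{dcweyl} applies to give $\lesssim R N^{d - 1/\tau}$. Therefore
\begin{equation*}
    \sum_{\mathbf{k}\in\mathcal{K}^i} |S|^{2\rho} \lesssim R^i N^{(d-1)^2 - 1 + d - 1/\tau + \varepsilon} = R^i N^{2\rho - 1/\tau + \varepsilon}.
\end{equation*}
Hölder's inequality, together with $\#\mathcal{K}^i \lesssim R^i$ and $d \leqslant m$, then yields
\begin{equation*}
    \sum_{\mathbf{k}\in\mathcal{K}^i}|S| \lesssim R^{i(1 - 1/(2\rho))}\bigl(R^i N^{2\rho - 1/\tau + \varepsilon}\bigr)^{1/(2\rho)} \lesssim R^b N^{1 - \frac{1}{\tau m(m-1)} + \varepsilon}.
\end{equation*}

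Assembling the strata gives $\sum_{n=1}^N \chi_\epsilon(\mathbf{P}(n)\bmod\Z^b) \lesssim N^{1 - \frac{1}{\tau m(m-1)} + \varepsilon}$ (the $NR^{-b}$ term is absorbed since $R^{-b} \approx N^{-1/(\tau m(m-1))}$). Multiplying by the $B^{C(b)}\epsilon^{1-b} = B^{C(b)} N^{(b-1)/(\tau b m(m-1))}$ covering factor produces the final exponent $1 - \frac{1}{\tau b m(m-1)} + \varepsilon$, matching the statement. The main technical obstacle is the first step of Case 2: picking $\rho = d(d-1)/2$ so that the Bourgain--Demeter--Guth bound is tight on both terms, and verifying that the resulting arithmetic of the exponents $\frac{(d-1)(d-2)}{2}$, $\rho$, and the $N^{d-1/\tau}$ coming from Lemma \ref{dcweyl} combine to the clean $2\rho - 1/\tau$; everything downstream (Hölder, covering, merging of $h$ and $k_i$) is then routine and parallels the Weyl calculation verbatim.
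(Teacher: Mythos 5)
Your proposal follows the paper's overall scheme (cover by $\epsilon$-balls, Fej\'er reduction, the square-lattice decomposition $\mathcal{K}^0\sqcup\cdots\sqcup\mathcal{K}^b$, H\"older, Lemma \ref{S2b}, and the balancing choice $2\rho=D_i(D_i-1)$), and your exponent bookkeeping in Case 2 is correct. However, there is a genuine gap in how you split the cases: you take Case 2 to be $D_i\geqslant 2$ and apply Lemma \ref{S2b} across that whole range, but Lemma \ref{S2b} is only stated for degree $\geqslant 3$. For $D_i=2$ you would need $J_{D_i-1}=J_1$, and Lemma \ref{BDG} (Vinogradov's mean value theorem) is likewise stated only for $b\geqslant 2$, so neither cited lemma covers the $D_i=2$ stratum. (The absorption of the $h=0$ term into $\sum_{h\geqslant 1}$ is also justified in the paper precisely by using $D_i\geqslant 3$.) In addition, your aside that the $\mathcal{K}^1$ stratum ``forces $D_i=1$'' presumes $D_1=1$; the hypotheses only guarantee $1\leqslant D_1<\cdots<D_b\leqslant m$, so $D_1$ may exceed $1$ — this does not damage your argument since your casework is actually on the value of $D_i$, but it is an incorrect identification.

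The paper avoids the gap by splitting Case 1: $1\leqslant D_i\leqslant 2$, handled exactly as in the Weyl-method Theorem \ref{wm} (yielding $R^iN^{1-1/(\tau 2^{D_i-1})+\varepsilon}$, which is $\lesssim R^bN^{1-1/(\tau m(m-1))+\varepsilon}$ since $2^{D_i-1}\leqslant 2\leqslant m(m-1)$ for $m\geqslant 2$), and Case 2: $D_i\geqslant 3$, via Lemma \ref{S2b} and Lemma \ref{BDG}. The fix to your argument is therefore small — simply move $D_i=2$ into the Weyl/geometric-series case rather than the Vinogradov case — but as written your Case 2 step is unjustified at $D_i=2$ and must be repaired.
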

\begin{proof}
	Let $\epsilon=N^{-\frac{1}{\tau b m(m-1)}}$ and $R=[\epsilon^{-1}/10]$.
	Then
	\begin{equation*}
		\mathrm{Leb}(\mathcal{S}) < \eta < N^{-\frac{1}{\tau m(m-1)}}=\epsilon^{b}.
	\end{equation*}
	Thus by Lemma \ref{covers}, $\mathcal{S}$ can be covered by at most $B^{C(b)}$ $\epsilon^{1-b}$ many $\epsilon$-balls.	
		
	Consider one such ball; without loss of generality, we may assume the ball is centered at 0. Let $\chi_{\epsilon}(\cdot)$ be the characteristic function of the ball in $\mathbb{T}^{b}$ with radius $\epsilon$ centered at $0$. Apply Lemma \ref{fejer} we have
	\begin{equation}\label{vdomdd}
		\sum_{n=1}^{N} \chi_{\epsilon}(\mathbf{P}(n)\bmod\mathbb{Z}^{b}) \lesssim  R^{-b}\sum_{\|\mathbf{k}\| <R} \bigg| \sum_{n=1}^{N}\mathrm{e}^{2\pi \mathrm{i} \langle \mathbf{k}, \mathbf{P}(n)\rangle } \bigg|.
	\end{equation}
	Just as we did in Section \ref{Weyl}, the square lattice $\{\mathbf{k}: \|\mathbf{k}\| <R\}$ are decomposed as the disjoint sets $\mathcal{K}^{i}, 0\leqslant i\leqslant b$. That is, \eqref{vdomdd} can be rewritten as
	\begin{equation}\label{vdec}
		\sum_{n=1}^{N} \chi_{\epsilon}(\mathbf{P}(n)\bmod\mathbb{Z}^{b}) \lesssim  NR^{-b}+R^{-b}\sum_{i=1}^{b} \sum_{\mathbf{k}\in \mathcal{K}^{i}} \bigg| \sum_{n=1}^{N}\mathrm{e}^{2\pi \mathrm{i} \langle \mathbf{k}, \mathbf{P}(n)\rangle } \bigg|,
	\end{equation}
	where $\mathcal{K}^{i}$ is defined in \eqref{K}. Denote 
	\begin{equation*}
		S:=\sum_{n=1}^{N}\mathrm{e}^{2\pi \mathrm{i} \langle \mathbf{k}, \mathbf{P}(n)\rangle }.
	\end{equation*}
	Fix $i$ and consider one such $\mathcal{K}^{i}$. Denote $\deg(P_{i})=D_{i}$. We need to deal with two cases: 
	
	{\bf Case 1: $1\leqslant D_{i}\leqslant 2$.}
	As similar as \eqref{Di1} and \eqref{Di2}, the following holds:
\begin{equation*}
		\sum_{\mathbf{k}\in \mathcal{K}^{i}} |S| \lesssim R^{i}N^{1-\frac{1}{\tau 2^{D_{i}-1}} +\varepsilon }\lesssim R^{b} N^{1-\frac{1}{\tau m(m-1)} +\varepsilon}.
	\end{equation*}
	
	{\bf Case 2: $D_{i}\geqslant 3$.}
	By the assumption of $\mathbf{P}$ and $\mathbf{k}\in \mathcal{K}^{i}$, we know
	\begin{equation*}\label{kxnd}
		\langle \mathbf{k},\mathbf{P}(n)\rangle =k_{i}\alpha_{i} n^{D_{i}}+O(n^{D_{i}-1}).
	\end{equation*}
	Since $\# \mathcal{K}^{i} \lesssim R^{i}$, by the H\"older inequality, for any $\rho\in \mathbb{Z}^{+}$ ($\rho$ to be determined),
	\begin{equation}\label{vholderd}
		\sum_{\mathbf{k}\in \mathcal{K}^{i}} |S| \leqslant R^{i(1-\frac{1}{2\rho})} \bigg(\sum_{\mathbf{k}\in \mathcal{K}^{i}} |S|^{2\rho}\bigg)^{1/(2\rho)}.
	\end{equation}
	By Lemma \ref{S2b}, for $\mathbf{k}\in \mathcal{K}^{i}$,
	\begin{equation}\label{S2rho}
		|S|^{2\rho} \lesssim N^{\frac{(D_{i}-1)(D_{i}-2) }{2}-1} J_{D_{i}-1}(3N;\rho)  \sum_{|h|\leqslant 2\rho N^{D_{i}-1}}  \min \bigg(N, \frac{1}{\|hk_{i}\alpha_{i}\|_{\mathbb{T}}}\bigg).
	\end{equation}
	Notice that $\min (N, \frac{1}{\|hk_{i}\alpha_{i}\|_{\mathbb{T}}})\geqslant 1$, then the inner sum in \eqref{S2rho} satisfies that
	\begin{equation}\label{inner}
		\begin{split}
			\sum_{|h|\leqslant 2\rho N^{D_{i}-1}}  \min \bigg(N, \frac{1}{\|hk_{i}\alpha_{i}\|_{\mathbb{T}}}\bigg)  &= N + 2\sum_{h=1}^{2\rho N^{D_{i}-1}}  \min \bigg(N, \frac{1}{\|hk_{i}\alpha_{i}\|_{\mathbb{T}}}\bigg)\\
			&\lesssim \sum_{h=1}^{2\rho N^{D_{i}-1}}  \min \bigg(N, \frac{1}{\|hk_{i}\alpha_{i}\|_{\mathbb{T}}}\bigg),
		\end{split}  
	\end{equation}
	where the last step uses that $D_{i}\geqslant 3$. Combine \eqref{S2rho} and \eqref{inner} with  Lemma \ref{sums}, for any $0<\varepsilon\leqslant 1$, we have
	\begin{equation*}
		\sum_{\mathbf{k}\in \mathcal{K}^{i}}|S|^{2\rho} \lesssim  R^{i-1}N^{\frac{(D_{i}-1)(D_{i}-2)}{2}-1+\varepsilon}  J_{D_{i}-1}(3N;\rho) \sum_{k=1}^{2\rho RN^{D_{i}-1}}  \min \bigg(N, \frac{1}{\|k\alpha_{i}\|_{\mathbb{T}}}\bigg).  
	\end{equation*}
	Thus it follows from Lemma \ref{dcweyl} and $\alpha_{i}\in DC(\gamma, \tau)$ with $\tau>1$ that
	\begin{equation*}
		\sum_{\mathbf{k}\in \mathcal{K}^{i}}|S|^{2\rho } \lesssim R^{i} N^{\frac{D_{i}(D_{i}-1)}{2}-\frac{1}{\tau}+\varepsilon}J_{D_{i}-1}(3N;\rho).
	\end{equation*}
	Recall the Vinogradov's Mean Value Theorem (see Lemma \ref{BDG}):
	\begin{equation*}
		J_{D_{i}-1}(3N;\rho)\lesssim N^{\rho+\varepsilon}+N^{2\rho- \frac{D_{i}(D_{i}-1)}{2}+\varepsilon}.
	\end{equation*}
	Choose $2\rho=D_{i}(D_{i}-1)$. Then we have
	\begin{equation*}
		\sum_{\mathbf{k}\in \mathcal{K}^{i}} |S|^{2\rho}\lesssim R^{i}N^{D_{i}(D_{i}-1)-\frac{1}{\tau}+\varepsilon}.
	\end{equation*}
	Substituting the above inequality into \eqref{vholderd} shows
	\begin{equation*}
		\begin{split}
			\sum_{\mathbf{k}\in \mathcal{K}^{i}} |S| &\lesssim R^{i(1-\frac{1}{D_{i}(D_{i}-1)})} (R^{i}N^{D_{i}(D_{i}-1)-\frac{1}{\tau}+\varepsilon})^{\frac{1}{D_{i}(D_{i}-1)}}\\
			&\lesssim R^{i} N^{1-\frac{1}{\tau D_{i}(D_{i}-1)}+\varepsilon} \lesssim R^{b} N^{1-\frac{1}{\tau m(m-1)}+\varepsilon},
		\end{split}
	\end{equation*}
	where the last step uses the assumption that $D_{i}\leqslant m$.
	 
	Therefore, regardless of {\bf Case 1} or {\bf Case 2}, it always holds that
	\begin{equation*}
		\sum_{\mathbf{k}\in \mathcal{K}^{i}} |S| \lesssim R^{b} N^{1-\frac{1}{\tau m(m-1)}+\varepsilon}, \ \text{for any}\ 1\leqslant i\leqslant m.
	\end{equation*}
	Combine the above estimate with \eqref{vdec}, one can see that
	\begin{equation*}
		\begin{split}
			\sum_{n=1}^{N} \chi_{\epsilon}(\mathbf{P}(n)\bmod\mathbb{Z}^{b}) &\lesssim NR^{-b}+  R^{-b}\sum_{i=1}^{b} R^{b} N^{1-\frac{1}{\tau m(m-1)}+\varepsilon}\\
			&\lesssim NR^{-b}+ N^{1-\frac{1}{\tau m(m-1)}+\varepsilon}.
		\end{split}
	\end{equation*}
	According to the definition of $R$, we know $R^{-b}\lesssim N^{-\frac{1}{\tau m(m-1)}}$, thus
	\begin{equation*}
		\sum_{n=1}^{N} \chi_{\epsilon}(\mathbf{P}(n)\bmod\mathbb{Z}^{b})\lesssim  N^{1-\frac{1}{\tau m(m-1)}+\varepsilon}.
	\end{equation*}
	Finally, we cover $\mathcal{S}$ by $B^{C(b)} \epsilon^{1-b}$ many $\epsilon$-balls so that
	\begin{equation*}
		\begin{split}
			\#\{1\leqslant n\leqslant N: \mathbf{P}(n)\bmod\mathbb{Z}^{b}\in\mathcal{S}\}&\leqslant B^{C(b)} \epsilon^{1-b} \sum_{n=1}^{N} \chi_{\epsilon}(\mathbf{P}(n)\bmod\mathbb{Z}^{b})\\
			&\leqslant C(\varepsilon,b, m, \gamma,\tau)B^{C(b)}N^{1-\frac{1}{\tau b m(m-1)}+\varepsilon}.
		\end{split} 
	\end{equation*}	
	This finishes the proof.
\end{proof}

\section{Proof of Theorem \ref{MainTheorem}}\label{App}

\subsection{Large deviation, sublinear bound, and quantum dynamics}

We begin this section with an abstract result (i.e. Theorem \ref{thm1}) that applies to bounded long-range operators 
\begin{equation*}
	(Hu)_n = \sum_{n' \in \Z^d} A(n,n') u_{n'} + V(n)u_{n}
\end{equation*}
on  $\ell^2(\Z^d)$ which satisfy
\begin{enumerate}
	\item[(1)]
	      for any $n,n'\in\Z^d$,
	      \begin{equation*}\label{GO11}
	      	|A(n,n')|\leqslant   C_1 \mathrm{e}^{-c_1 \|n-n'\|},  C_1>0, c_1>0;
	      \end{equation*}
	\item[(2)]  
	      for any  $n,n'\in\Z^d$, 
	      \begin{equation*}\label{GO12}
	      	A(n,n')=\overline{A(n',n)};
	      \end{equation*}
	\item[(3)]		
	      for any $n, n', k\in \Z^d$, 
	      \begin{equation*}\label{gdec72}
	      	A(n+k,n^\prime+k)= A(n,n^\prime).
	      \end{equation*}
	      	      		
\end{enumerate}
Theorem \ref{thm1} may be of independent interest and could find applications in the future.
	
Let us explain the setting. We first recall the concept of elementary region. For $d=1$,   the elementary region of size $N$ centered at 0 is given by
\begin{equation*}
	Q_N=[-N,N].
\end{equation*}
For $d\geqslant 2$, denote by $Q_N$ an elementary region of size $N$ centered at 0, which is one of the following regions,
\begin{equation*}
	Q_N=[-N,N]^d
\end{equation*}
or
$$Q_N=[-N,N]^d\setminus\{n\in\mathbb{Z}^d: \ n_i \ \varsigma_i \ 0, 1\leqslant i\leqslant d\},$$
where  for $ i=1,2,\cdots,d$, $ \varsigma_i\in \{<,>,\emptyset\}$ and at least two $ \varsigma_i$  are not $\emptyset$. Denote by $\mathcal{E}_N^{0}$ the set of all elementary regions of size $N$ centered at 0. 
Let
$$\mathcal{E}_N:=\{n+Q_N:n\in\mathbb{Z}^d,Q_N\in \mathcal{E}_N^{0}\}.$$
We call the elements in $\mathcal{E}_N$ elementary regions.

Next, we define the Green's function. Let  $R_{{\Lambda}}$  be the operator of restriction (projection) to $\Lambda \subseteq \Z^d$. Define the Green's function at $z$ by
\begin{equation*}\label{g0}
	G_{{\Lambda}}(z)=(R_{{\Lambda}}(H-zI)R_{{\Lambda}})^{-1}.
\end{equation*}
Set $ G(z)=(H-zI)^{-1}$.  Clearly, both $  G_{{\Lambda}}(z)$ and $G(z)$ are always well-defined for $z\in \C_+\equiv \{z\in \C: \Im z>0\}$. Fixed $0<\sigma_1<1$,
we say  an elementary region $ \Lambda\in \mathcal{E}_{N}$ is  in class {\it SG$_{N}$} (strongly good with size $N$) if
\begin{align*}
	\|G_{\Lambda}(z)\|           & \leqslant  \mathrm{e}^{N^{\sigma_1}},                                         \\
	|G_{\Lambda}(z)(n,n^\prime)| & \leqslant  \mathrm{e}^{-c_2 \|n-n' \|}, \text{ for } \|n-n' \|\geqslant N/10, 
\end{align*}

Finally, we note that the self-adjoint operator $H$  is bounded, so there exists a large $K>0$ such that $\sigma({H})\subseteq [-K+1,K-1]$. 
 
\begin{theorem}\cite[Corollary 2.3]{LiuQD23}\label{cor1}
	Define $\mathcal{B}_{N,N_1}$ as
	\begin{equation*}
		\mathcal{B}_{N,N_1}=\{ n\in [-N,N]^d: \text{ there exists } Q_{N_1}\in  \mathcal{E}_{N_1}^{0} \text{ such that } n+Q_{N_1}\notin \text{{\it SG}}_{N_1}\}.
	\end{equation*}
	Assume that there exists $\epsilon_0>0$ such that for any $z=E+\mathrm{i}\epsilon $ with  $|E|\leqslant K$ and $0<\epsilon\leqslant \epsilon_0$, and arbitrarily  small $\varepsilon>0$,
	\begin{equation}\label{gsublinear}
		\#  \mathcal{B}_{N,[ N^{\varepsilon} ]}\leqslant N^{1-\delta} {\text{ when }} N \geqslant N_0 
	\end{equation}
	($N_0$ may depend on $\varepsilon$).
	Then  for any $\phi$ with compact support and any $\varepsilon>0$ there exists $T_0>0$ (depending on $d,p,\phi,K$, $\sigma_{1},\delta, \epsilon_0$, $c_1$, $c_2$, $C_1$, $N_0$ and $\varepsilon$) such that for any $T\geqslant T_0$,
	\begin{align*}
		\langle|\tilde{X}_H|_{\phi}^p\rangle(T) & \leqslant  (\log T)^{  \frac{p}{\delta} +\varepsilon}, \\
		\langle|{X}_H|_{\phi}^p\rangle(T)       & \leqslant  (\log T)^{  \frac{p}{\delta} +\varepsilon}. 
	\end{align*}
\end{theorem}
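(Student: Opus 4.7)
The plan is to adapt the Damanik--Tcheremchantsev framework, which bounds dynamical moments through a weighted $L^2$-norm of the resolvent on the horizontal line $\Im z = 1/T$. Using a Parseval-type identity for Laplace transforms, one writes
\begin{equation*}
\langle|\tilde X_H|_\phi^p\rangle(T) \leqslant \frac{C}{T}\sum_{n \in \Z^d}|n|^p \int_{\R} \bigl|\langle G(E+\mathrm{i}/T)\phi,\delta_n\rangle\bigr|^2\, \mathrm{d}E.
\end{equation*}
Since $H$ is bounded with $\sigma(H) \subseteq [-K+1,K-1]$, contributions from $|E|>K$ decay rapidly (by Combes--Thomas), so the integral effectively runs over $|E|\leqslant K$. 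The sum is then split at $|n| = L$ where $L$ is a power of $\log T$; indices $|n|\leqslant L$ give a trivial contribution bounded by $L^p\|\phi\|^2$, and the remainder must be shown to be negligible.

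For the large-$|n|$ contribution, the key is to obtain a sufficiently strong exponential upper bound on $|\langle G(E+\mathrm{i}/T)\phi,\delta_n\rangle|$, and here I would propagate information from local good boxes using iterated resolvent identities. Given any $n$ with $|n|\geqslant L$, I follow a chain of lattice sites from $\mathrm{supp}(\phi)$ to $n$. At each site $m$ lying \emph{outside} $\mathcal{B}_{N,N_1}$ with $N_1=[N^\varepsilon]$, the hypothesis \eqref{gsublinear} produces a strongly good elementary region $m+Q_{N_1}$. The resolvent identity
\begin{equation*}
G(z)(\phi,\delta_n) = G_{m+Q_{N_1}}(z)(\phi,\delta_n) - \sum_{(j,j')\in\partial(m+Q_{N_1})} G(z)(\phi,\delta_{j'})\, A(j',j)\, G_{m+Q_{N_1}}(z)(\delta_j,\delta_n)
\end{equation*}
together with the SG$_{N_1}$ decay $|G_{m+Q_{N_1}}(z)(j,n)|\leqslant \mathrm{e}^{-c_2 N_1}$ on the boundary, then lets me contract the expansion step by step through a chain of good boxes. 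The accumulated decay is roughly $\mathrm{e}^{-c|n|/N_1}$ at each step, degraded only by the (sparse) bad sites the chain must hop over.

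The exponent $1/\delta$ in the conclusion emerges from matching scales. The sublinear bound \eqref{gsublinear} says that in $[-N,N]^d$ at most $N^{1-\delta}$ sites can be bad; consequently any chain through a cube of size $N$ can be routed through good sites with only $N^{1-\delta}$ forced deviations. Pushing decay down to threshold $1/T$ thus requires $\mathrm{e}^{-cN^\delta}\lesssim 1/T$, i.e.\ $N\sim(\log T)^{1/\delta}$, which forces $|n|$ in the nontrivial contributing range to satisfy $|n|\lesssim(\log T)^{1/\delta}$. Inserting this into $L^p\|\phi\|^2$ yields $(\log T)^{p/\delta+\varepsilon}$ for the time-averaged moment, establishing the first inequality. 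Converting to the non-averaged moment $\langle|X_H|_\phi^p\rangle(T)$ is then standard: an analogous representation and bound for each fixed time, together with an extra logarithmic factor absorbed into $\varepsilon$, gives the second inequality.

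The main technical obstacle I anticipate is the construction of the chain of good boxes in dimensions $d\geqslant 2$: the bad set $\mathcal{B}_{N,N_1}$ can have arbitrary shape, so one must be able to place a good region at \emph{every} site not in $\mathcal{B}_{N,N_1}$ while still touching the previous region in the chain. This is precisely why the class $\mathcal{E}_{N_1}^{0}$ allows cubes with missing octants rather than merely cubes; the extra flexibility is what lets the chain avoid the bad set without breaking. A secondary subtlety is controlling the boundary sum uniformly in $\Im z = 1/T$ as $T\to\infty$, which relies on the exponential off-diagonal bound on $A(n,n')$ to absorb the polynomial growth of $|\partial(m+Q_{N_1})|$.
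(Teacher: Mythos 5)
The paper does not actually prove this statement; it is imported verbatim as \cite[Corollary~2.3]{LiuQD23}, so there is no ``paper proof'' to compare against line by line. Your outline (Parseval-type identity on the line $\Im z = 1/T$, split at a cutoff $L$, resolvent chaining through strongly good boxes) is in the right family of arguments, although note that the authors explicitly describe \cite{LiuQD23} as an alternative to the Damanik--Tcheremchantsev formula used in \cite{HanJitomirskaya, JPo}, so calling it ``the Damanik--Tcheremchantsev framework'' is at odds with how the paper positions this result.

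The genuine gap is in the step where $1/\delta$ is supposed to appear. You write that, since any chain through a cube of size $N$ can be routed through good sites with only $N^{1-\delta}$ forced deviations, ``pushing decay down to threshold $1/T$ thus requires $\mathrm{e}^{-cN^\delta}\lesssim 1/T$, i.e.\ $N\sim (\log T)^{1/\delta}$.'' This does not follow from the preceding sentence. If you pay an $O(1)$ detour cost at each of the $N^{1-\delta}$ bad sites and otherwise collect decay $\mathrm{e}^{-c_2\|n-n'\|}$ from the $SG_{N_1}$ boxes, the accumulated estimate is of order $\mathrm{e}^{-c(N-N^{1-\delta+\varepsilon})}\approx \mathrm{e}^{-cN}$, which would force only $L\sim \log T$ and yield $(\log T)^{p}$, not $(\log T)^{p/\delta}$. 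The decay is never of the form $\mathrm{e}^{-cN^\delta}$; that expression is asserted but not derived, and it is inconsistent with the chaining picture you set up. The actual source of the exponent $1/\delta$ is that at each bad site the iterated resolvent identity cannot collect any decay and instead pays the trivial bound $\|G(z)\|\leqslant 1/\Im z = T$. With up to $N^{1-\delta}$ such sites, the chain accumulates a penalty of order $T^{N^{1-\delta}}$, and this competes against the gain $\mathrm{e}^{-cN}$ from the strongly good boxes. Demanding $T^{N^{1-\delta}}\mathrm{e}^{-cN}\lesssim T^{-C}$ gives $cN\gtrsim N^{1-\delta}\log T$, i.e.\ $N^\delta\gtrsim \log T$, which is exactly where $L\sim (\log T)^{1/\delta}$ comes from. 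Without this balancing of the trivial bound against the good decay, the argument as written cannot produce the claimed exponent. (Your point about the non-cubical elementary regions in $\mathcal{E}_{N_1}^0$ being needed to route the chain is correct and is an important technical ingredient, but it does not address this scaling issue.)
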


Theorem \ref{cor1} can be applied to dynamically-defined operators. Let $f$ be a function from $\Z^d\times\T^b$ to $\T^b$. Assume for any $m,n\in\mathbb{Z}^{d}$,
\begin{equation*}
	f(m+n,x)= f(m, f(n,x)).
\end{equation*}
Sometimes, we write down $f^{n} x$ for $f(n,x)$ for convenience, where $n\in \Z^d$ and $x\in \T^b$. Define a family of  operators $\{H_x\}_{x\in\mathbb{T}^{b}}$  on $\ell^2(\Z^d)$:
\begin{equation*}\label{ops}
	(H_{x}u)_n = \sum_{n' \in \Z^d} A(n,n') u_{n'} + v(f(n,x))u_{n},
\end{equation*}
where  $v$ is a real  analytic   function on $\T^b$.

In fact, \eqref{gsublinear} in Theorem \ref{cor1} can be deduced from the assumption of the LDT (large deviation theorem) and the sublinear bound for the semi-algebraic set. More precisely, we say the Green's function  of an operator $H_x$  satisfies the LDT in complexified energies (sometimes just say LDT for short)  if  there exist $\epsilon_0>0$ and $N_0>0$ such that for  any  $N\geqslant N_0$, there exists a  subset $X_N\subseteq \mathbb{T}^b$  such that
\begin{equation}\label{gdec3}
	\mathrm{Leb}(X_N)\leqslant \mathrm{e}^{-{N}^{\sigma_2}},
\end{equation}
and for any $x\notin  X_N \bmod \Z^b$ and $Q_N\in \mathcal{E}_N^0$,
\begin{align*}
	\| G_{Q_N}(z) \|   & \leqslant \mathrm{e}^{N^{\sigma_1}},                                       \\
	|G_{Q_N}(z)(n,n')| & \leqslant \mathrm{e}^{-c_2 \|n-n'\|}, \text{ for } \|n-n'\|\geqslant N/10, 
\end{align*}
where  $z=E+\mathrm{i}\epsilon$ with $E\in[-K,K]$ (recall that $\sigma(H_x)\subseteq [-K+1,K-1]$) and $0<\epsilon\leqslant \epsilon_0$.

\begin{theorem}\label{thm1}
	Let $\mathcal{S}\subseteq [0,1]^{b}$ be a semi-algebraic set of degree $B$ and $\mathrm{Leb}(\mathcal{S}) < \eta$.
	Assume that  for any $N$ with
	\begin{equation*}
		\log B \lesssim \log N < \log \frac{1}{\eta},
	\end{equation*}
	we have the following sublinear bound:
	\begin{equation}\label{gsub}
		\#\{  n \in[-N, N]^{d}:  f^{n}x \in\mathcal{S}\} \leqslant C_{2} N^{1-\delta}.
	\end{equation}
	Assume LDT holds.
	Then  for any $\phi$ with compact support and any $\varepsilon>0$ there exists $T_0>0$ (depending on $d,b,p,\phi,K$, $\sigma_{1}, \sigma_{2},\delta, \epsilon_0$, $c_1$, $c_2$, $C_1$, $C_{2}$, $N_0$ and $\varepsilon$) such that for any $T\geqslant T_0$,
	\begin{align*}
		\langle|\tilde{X}_{H_{x}}|_{\phi}^p\rangle(T) & \leqslant  (\log T)^{  \frac{p}{\delta} +\varepsilon}, \\
		\langle|{X}_{H_{x}}|_{\phi}^p\rangle(T)       & \leqslant  (\log T)^{  \frac{p}{\delta} +\varepsilon}. 
	\end{align*}
\end{theorem}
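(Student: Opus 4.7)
The plan is to reduce Theorem \ref{thm1} to the abstract Theorem \ref{cor1}: it suffices to verify the sublinear bound \eqref{gsublinear} on the bad centers $\mathcal{B}_{N,N_{1}}$ at a scale $N_{1}=[N^{\kappa}]$ for some small $\kappa>0$. This proceeds in two steps: translate the LDT bad set into a dynamical condition on $f^{n}x$, and then apply hypothesis \eqref{gsub}.

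First, I would exploit the translation equivariance of the family $\{H_{x}\}$: since $f(n+m,x)=f(m,f(n,x))$ and $A(\cdot,\cdot)$ is shift invariant, conjugation by the natural translation on $\ell^{2}(\Z^{d})$ yields $H_{f^{n}x}\cong H_{x}$ unitarily, and the Green's functions $G_{Q_{N_{1}}}(z;H_{f^{n}x})$ and $G_{n+Q_{N_{1}}}(z;H_{x})$ are unitarily equivalent. Consequently, $n\in \mathcal{B}_{N,N_{1}}$ if and only if $f^{n}x$ lies in the set $Y_{N_{1}}\subseteq \T^{b}$ of those $y$ for which some $Q_{N_{1}}\in\mathcal{E}_{N_{1}}^{0}$ fails $\mathrm{SG}_{N_{1}}$ for $H_{y}$ at some energy $z$ in the strip $[-K,K]+\mathrm{i}(0,\epsilon_{0}]$. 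The LDT hypothesis immediately gives $\mathrm{Leb}(Y_{N_{1}}) \leqslant e^{-N_{1}^{\sigma_{2}}}$.

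The main obstacle, handled next, is that $Y_{N_{1}}$ is not literally semi-algebraic: $v$ is only analytic and the SG conditions involve a continuum of energies. I would first truncate $v$ to a trigonometric polynomial of degree $\lesssim N_{1}$; analyticity of $v$ makes the truncation error at most $e^{-cN_{1}}$, which is much smaller than the SG tolerance $e^{-N_{1}^{\sigma_{1}}}$ since $\sigma_{1}<1$. A polynomial-size grid in $z$, combined with the resolvent identity to control values between grid points, handles the energy continuum. After the trigonometric substitution $x_{j}\mapsto(\cos 2\pi x_{j},\sin 2\pi x_{j})$, the minors (via Cramer's rule) and norms entering the SG conditions become polynomial inequalities in the new coordinates, producing a semi-algebraic set $\mathcal{S}_{N_{1}}\supseteq Y_{N_{1}}$ of degree $\leqslant N_{1}^{C(b)}$ with $\mathrm{Leb}(\mathcal{S}_{N_{1}}) < \eta:=e^{-N_{1}^{\sigma_{2}}/2}$.

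Finally, I would apply hypothesis \eqref{gsub} to $\mathcal{S}_{N_{1}}$. With $N_{1}=[N^{\kappa}]$ we have $\log\deg(\mathcal{S}_{N_{1}}) \lesssim \log N_{1} \lesssim \log N$ and $\log N < \tfrac{1}{2}N^{\kappa \sigma_{2}} = \log(1/\eta)$ for $N$ sufficiently large, so \eqref{gsub} yields
\[
\#\mathcal{B}_{N,N_{1}} \leqslant \#\{n\in[-N,N]^{d}: f^{n}x\in\mathcal{S}_{N_{1}}\} \leqslant C_{2}N^{1-\delta},
\]
which is precisely \eqref{gsublinear}. Theorem \ref{cor1} then delivers both quantum dynamical upper bounds. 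The most delicate step is the semi-algebraic approximation of $Y_{N_{1}}$; the argument must be arranged so that the degree grows only polynomially in $N_{1}$ while the Lebesgue measure loss remains exponentially small.
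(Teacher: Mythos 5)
Your proposal is correct and follows essentially the same route as the paper's own (rather terse) proof: both reduce to Theorem \ref{cor1} by identifying $\mathcal{B}_{N,N_1}$ (via translation covariance of $H_x$) with $\{n : f^n x \in X_{N_1}\}$, replacing the analytic bad set $X_{N_1}$ by a semi-algebraic superset of degree $N_1^{C}$ through trigonometric truncation of $v$ and energy discretization, and then invoking \eqref{gsub} with $\eta = \mathrm{e}^{-N_1^{\sigma_2}}$ and $N_1 = [N^{\varepsilon}]$. You make explicit the translation-equivariance step and the mechanics of the semi-algebraic approximation, both of which the paper compresses into a single sentence.
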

\begin{proof}
	By approximating the analytic function with trigonometric polynomials, employing Taylor expansions, and applying standard perturbation arguments, we can assume that $X_N$ (as defined in \eqref{gdec3}) is a semi-algebraic set with degree less than $N^C$. Let $N_1=[ N^\varepsilon]$. Applying $\mathcal{S}=X_{N_1}$, $\eta=\mathrm{e}^{-N_1^{\sigma_2}}$ to \eqref{gsub}, one has \eqref{gsublinear} holds. Now Theorem \ref{thm1} follows from Theorem \ref{cor1}.
\end{proof}

\subsection{Proof of Theorem \ref{MainTheorem}}
Recall the skew-shift $f: \mathbb{T}^{b}\rightarrow \mathbb{T}^{b}$ is defined as
\begin{equation*}
	fx=f(x_{1},x_{2},\cdots, x_{b})=(x_{1}+\omega, x_{2}+x_{1},\cdots, x_{b}+x_{b-1}).
\end{equation*}
By the direct calculation, the $n$ step iteration of the skew-shift is
\begin{equation*}
	\begin{split}
		f^{n}x&=f^{n}(x_{1},x_{2},\cdots, x_{b})\\
		&=(x_{1}+C_{n}^{1}\omega, x_{2}+C_{n}^{1} x_{1}+C_{n}^{2}\omega, \cdots, x_{b}+C_{n}^{1}x_{b-1}+\cdots + C_{n}^{b}\omega).
	\end{split}
\end{equation*}
Obviously, for any $1\leqslant i\leqslant b$, the projection onto the $i$th coordinate for the $n$ step skew-shift is a polynomial in $n$ of degree $i$ whose highest degree term is $(\omega/i!)n^{i} $. In particular, the projection onto the $b$th coordinate is
\begin{equation*}
	(f^{n}x)_{b}=x_{b}+nx_{b-1}+ \frac{n(n-1)}{2}x_{b-2}+\cdots+ \frac{n(n-1)\cdots (n-b+1)}{b!} \omega.
\end{equation*}

Now Theorem \ref{MainTheorem} follows as a corollary of our semi-algebraic set estimates. 

\begin{proof}[Proof of Theorem \ref{MainTheorem}]
		
	By Lemma \ref{keepDC} and $\omega\in DC(\gamma, \tau)$, we know that for any $1\leqslant i\leqslant b$, $\omega/i!\in DC(\tilde{\gamma}, \tau)$ for some $\tilde{\gamma}=\tilde{\gamma}(b,\gamma)$.  
		
	For $2\leqslant b\leqslant 5$, we apply Theorem \ref{wm} (for $b\geqslant 6$, we apply Theorem \ref{vm}) with
	\begin{equation*}
		m=b, \ \mathbf{P}(n) \bmod \mathbb{Z}^{b}= f^{n}x,  \ \deg(P_{i})=i, \ \alpha_{i}=\omega/i!,
	\end{equation*}
	so that
	\begin{equation}\label{b1}
		\#\{ n \in[-N, N]:  f^{n}x \in\mathcal{S}\} \leqslant C(\varepsilon, b,\gamma,\tau) N^{1-\frac{1}{\tau b 2^{b-1}}+\varepsilon}.
	\end{equation}
	and
	\begin{equation}\label{b2}
		\#\{ n \in[-N, N]:  f^{n}x \in\mathcal{S}\} \leqslant C(\varepsilon, b,\gamma,\tau) N^{1-\frac{1}{\tau b^{2}(b-1)}+\varepsilon}.
	\end{equation}
	Recall
	\begin{equation}\label{psidef}
		\psi(b)=\begin{cases}
		2^{b-1},& \text{if} \ 2\leqslant b\leqslant 5,\\
		b(b-1),&\text{if} \ b\geqslant 6.
		\end{cases}
	\end{equation}
	Hence the theorem follows from Theorem \ref{thm1}, \eqref{b1}, \eqref{b2}, and \eqref{psidef}. 
\end{proof}

\appendix

\section{Proof of Lemma \ref{S2b}}\label{A1}
Denote $c_{n}=\mathrm{e}^{2\pi\mathrm{i} P(n;\alpha)}$.
For any $\rho\in \mathbb{Z}^{+}$, we can rewrite
\begin{equation}\label{sb}
	S^{\rho}=\sum_{\mathbf{n}} c_{n_{1}}c_{n_{2}} \cdots c_{n_{\rho}},
\end{equation}
where $\mathbf{n}$ runs over $\{1,2,\cdots,N\}^{\rho}$. For $\mathbf{n}=(n_{1},n_{2},\cdots,n_{\rho})$, we let $s_{j}(\mathbf{n})=n_{1}^{j}+n_{2}^{j}+\cdots+n_{\rho}^{j}$. Now we classify the set of $\mathbf{n}$ according to the value of $\mathbf{s}=(s_{1}(\mathbf{n}),s_{2}(\mathbf{n}),\cdots, s_{b-2}(\mathbf{n}))$. Let 
\begin{equation*}
	\mathfrak{S}=\{1,2,\cdots,\rho N\} \times \{1,2,\cdots, \rho N^{2}\}\times \cdots \times \{1,2,\cdots,\rho N^{b-2}\}.
\end{equation*}
Since $1\leqslant n_{i}^{j}\leqslant N^{j}$, we may restrict our attention to points $\mathbf{s}\in\mathfrak{S}$. For $\mathbf{s}\in\mathfrak{S}$, we stratify the set of $\mathbf{n}$ by letting
\begin{equation*}
	\mathcal{N}(\mathbf{s})=\{\mathbf{n}\in \{1,\cdots,N\}^{\rho}: s_{j}(\mathbf{n})=s_{j}, 1\leqslant j\leqslant b-2\}.
\end{equation*}
Then the sum in \eqref{sb} can be partitioned into subsums:
\begin{equation*}
	S^{\rho}=\sum_{\mathbf{s}\in\mathfrak{S}} \sum_{\mathbf{n}\in\mathcal{N}(\mathbf{s})} c_{n_{1}}c_{n_{2}} \cdots c_{n_{\rho}}.
\end{equation*}
Since
\begin{equation*}
	\#\mathfrak{S}=\rho N\times \rho N^{2}\times\cdots\times \rho N^{b-2}=\rho^{b-2} N^{(b-1)(b-2)/2}, 
\end{equation*}
by the H\"older inequality, we have
\begin{equation}\label{S2bdef}
	\begin{split}
		|S|^{2\rho}&\leqslant \rho^{b-2}N^{(b-1)(b-2)/2} \sum_{\mathbf{s}\in \mathfrak{S}} \Big|\sum_{\mathbf{n}\in \mathcal{N}(\mathbf{s})}c_{n_{1}}c_{n_{2}}\cdots c_{n_{\rho}}\Big|^{2}\\
		&=\rho^{b-2}N^{(b-1)(b-2)/2} \sum_{\substack{\mathbf{m},\mathbf{n}\\ s_{j}(\mathbf{m})=s_{j}(\mathbf{n})\\ 1\leqslant j\leqslant b-2}} c_{m_{1}}\cdots c_{m_{\rho}} \overline{c_{n_{1}}} \cdots \overline{c_{n_{\rho}}}.
	\end{split}
\end{equation}
To estimate \eqref{S2bdef}, we only need to estimate 
\begin{equation}\label{zdef}
	\mathcal{Z}:=\sum_{\substack{\mathbf{m},\mathbf{n}\\ s_{j}(\mathbf{m})=s_{j}(\mathbf{n})\\ 1\leqslant j\leqslant b-2}} c_{m_{1}}\cdots c_{m_{\rho}} \overline{c_{n_{1}}} \cdots \overline{c_{n_{\rho}}}.
\end{equation}
In the following, we estimate $\mathcal{Z}$. By the elimination from the symmetry of $\mathbf{m}$ and $\mathbf{n}$, we can see that
\begin{equation}\label{sym}
	\mathcal{Z}=\mathrm{e}^{2\pi\mathrm{i}(\alpha_{b}(s_{b}(\mathbf{m})-s_{b}(\mathbf{n}))+\alpha_{b-1}(s_{b-1}(\mathbf{m})-s_{b-1}(\mathbf{n})))}.
\end{equation}

Now we shift the every component of $\mathbf{m}$ and $\mathbf{n}$ by $m_{1}$. More precisely, we let $m_{i}=m_{1}+u_{i}$ and $n_{i}=m_{1}+v_{i}$ for $1\leqslant i\leqslant \rho$. Then by Binomial Theorem,
\begin{equation*}
	s_{j}(\mathbf{u})=\sum_{i=1}^{\rho}(m_{i}-m_{1})^{j}=\sum_{r=0}^{j}C_{j}^{r} s_{r}(\mathbf{m})(-m_{1})^{j-r},
\end{equation*}
and similarly,
\begin{equation*}
	s_{j}(\mathbf{v})=\sum_{i=1}^{\rho}(n_{i}-m_{1})^{j}=\sum_{r=0}^{j}C_{j}^{r} s_{r}(\mathbf{n})(-m_{1})^{j-r}.	
\end{equation*}
One can see that for $1\leqslant j\leqslant b$,
\begin{equation}\label{uvmn}
	s_{j}(\mathbf{u})-s_{j}(\mathbf{v})=\sum_{r=0}^{j}C_{j}^{r} \bigg(s_{r}(\mathbf{m})-s_{r}(\mathbf{n})\bigg)(-m_{1})^{j-r}.
\end{equation}
Hence the relations $s_{j}(\mathbf{m})=s_{j}(\mathbf{n})$ for $1\leqslant j\leqslant b-2$ imply that $s_{j}(\mathbf{u})=s_{j}(\mathbf{v})$ for $1\leqslant j\leqslant b-2$. In addition, we observe from \eqref{uvmn} that
\begin{equation*}
	s_{b-1}(\mathbf{u})-s_{b-1}(\mathbf{v}) =s_{b-1}(\mathbf{m})-s_{b-1}(\mathbf{n}),
\end{equation*}
and that
\begin{equation*}
	s_{b}(\mathbf{u})-s_{b}(\mathbf{v}) =s_{b}(\mathbf{m})-s_{b}(\mathbf{n})-bm_{1}(s_{d-1}(\mathbf{m})-s_{b-1}(\mathbf{n})).
\end{equation*}
For brevity we let $t_{j}=t_{j}(\mathbf{u},\mathbf{v})=s_{j}(\mathbf{u})-s_{j}(\mathbf{v})$. Then \eqref{sym} may be written as
\begin{equation}\label{Suv}
	\begin{split}	\mathcal{Z}&=\sum_{\substack{\mathbf{u},\mathbf{v}\\ t_{j}=0\\ 1\leqslant j\leqslant b-2}} \mathrm{e}^{2\pi\mathrm{i}(t_{b}\alpha_{b}+t_{b-1}\alpha_{b-1})} \sum_{m_{1}=1}^{N} \mathrm{e}^{2\pi\mathrm{i}(bt_{b-1} m_{1}\alpha_{b})}.
	\end{split}
\end{equation}
It is obvious that the geometric series above satisfies
\begin{equation}\label{Sinner}
	\bigg|\sum_{m_{1}=1}^{N} \mathrm{e}^{2\pi\mathrm{i}(bt_{b-1} m_{1}\alpha_{b})}\bigg| \lesssim \min \bigg(N, \frac{1}{\|bt_{b-1}\alpha_{b}\|_{\mathbb{T}}}\bigg).
\end{equation}
Substituting \eqref{Sinner} into \eqref{Suv} showing that
\begin{equation}\label{Z1}
	\mathcal{Z}\lesssim \sum_{\substack{\mathbf{u},\mathbf{v}\\ t_{j}=0\\ 1\leqslant j\leqslant b-2}} \min \bigg(N, \frac{1}{\|bt_{b-1}\alpha_{b}\|_{\mathbb{T}}}\bigg).
\end{equation}

Let $h\in \Z$ be a parameter with $|h|\leqslant 2\rho N^{b-1}$. Let $R_{1}(h)$ be the number of solutions of the system of equations 
\begin{equation*}
	\begin{split}	 &u_{2}^{j}+\cdots+u_{\rho}^{j}=v_{1}^{j}+\cdots+v_{\rho}^{j}, \ 1\leqslant j\leqslant b-2,\\
		&u_{2}^{b-1}+\cdots+u_{\rho}^{b-1}=h+v_{1}^{b-1}+\cdots+v_{\rho}^{b-1},
	\end{split}
\end{equation*}
in integer variables for which $|u_{i}|\leqslant N$ and $|v_{i}|\leqslant N$. Then by \eqref{Z1} and the definition of $R_{1}(h)$,
\begin{equation}\label{zr1}
	\mathcal{Z}\lesssim \sum_{|h|\leqslant 2\rho N^{b-1}}R_{1}(h) \min \bigg(N,\frac{1}{\|b h\alpha_{b}\|_{\mathbb{T}}}\bigg).
\end{equation}

Now we recover the variables $\mathbf{m},\mathbf{n}$ from $\mathbf{u},\mathbf{v}$. Treat $m_{1}$ as a parameter. We let $m_{i}=m_{1}+u_{i}$ and $n_{i}=m_{1}+v_{i}$ for $1\leqslant i\leqslant \rho
$. Then $R_{1}(h)$ is the number of solutions of the system
\begin{equation}\label{ns}
	\begin{split}
		&s_{j}(\mathbf{m})=s_{j}(\mathbf{n}), \ 1\leqslant j\leqslant b-2,\\
		&s_{b-1}(\mathbf{m})=h+s_{b-1}(\mathbf{n}),
	\end{split}
\end{equation}
in integer variables for which
\begin{equation*}
	\begin{split}
		&m_{1}-N\leqslant m_{i}\leqslant m_{1}+N, \ 1\leqslant i\leqslant \rho,\\
		&m_{1}-N\leqslant n_{i}\leqslant m_{1}+N, \ 1\leqslant i\leqslant \rho.\\
	\end{split}
\end{equation*}
Since $1\leqslant m_{i}, n_{i}\leqslant N$, we only need to consider $N+1\leqslant m_{1}\leqslant 2N$. Recall that Hardy–Ramanujan–Littlewood circle method relates the number of solutions to the integral on the circle. To apply the circle method, we let $R_{2}(h)$ be the number of solutions of \eqref{ns} subject to the weaker constraints:
\begin{equation*}
	\begin{split}
		&1\leqslant m_{i}\leqslant 3N,\ 1\leqslant i\leqslant \rho,\\
		&1\leqslant n_{i}\leqslant 3N,\ 1\leqslant i\leqslant \rho.\\
	\end{split}
\end{equation*}
Clearly, by the standard circle method, 
\begin{equation}\label{r2}
	R_{2}(h)=\int_{\mathbb{T}^{d-1}} \Big|\sum_{n=1}^{3N} \mathrm{e}^{2\pi\mathrm{i}\tilde{P}(n; \beta)} \Big|^{2\rho} \mathrm{e}^{-2\pi\mathrm{i} h\beta_{b-1}} \ \mathrm{d}\beta_{1}\cdots \mathrm{d}\beta_{b-1},
\end{equation}
where $\tilde{P}(n;\beta):=\sum_{j=1}^{b-1}\beta_{j} n^{j}$ is a polynomial of degree of $b-1$. Since for each $N+1\leqslant m_{1}\leqslant 2N$ we have $R_{2}(h)\geqslant R_{1}(h)$, thus $R_{2}(h)\geqslant N R_{1}(h)$. It is evident that  by taking the absolute value in \eqref{r2}, 
\begin{equation}\label{r20}
	R_{2}(h)\leqslant R_{2}(0), \ \text{for all} \ h.
\end{equation}
But by \eqref{Jdef1} and \eqref{Jdef2} we know $R_{2}(0)=J_{b-1}(3N;\rho)$. So by \eqref{zr1}, \eqref{r2}, and \eqref{r20}, we have
\begin{equation}\label{zJ}
	\mathcal{Z}\lesssim \frac{1}{N} J_{b-1}(3N;\rho) \sum_{|h|\leqslant 2\rho N^{b-1}} \min \bigg(N, \frac{1}{\|b h\alpha_{b}\|_{\mathbb{T}}}\bigg).
\end{equation}

Finally, we substitute \eqref{zJ} and \eqref{zdef} into \eqref{S2bdef}, one can get
\begin{equation*}
	|S|^{2\rho}\lesssim N^{(b-1)(b-2)/2} \frac{1}{N} J_{b-1}(3N;\rho)\sum_{|h|\leqslant 2\rho b N^{b-1}} \min \bigg(N, \frac{1}{\| h\alpha_{b}\|_{\mathbb{T}}}\bigg).
\end{equation*}

\section*{Acknowledgments}
W. Liu was a 2024-2025 Simons fellow.
This work was supported in part by NSF DMS-2246031,  DMS-2052572 and  DMS-2000345.

\section*{Statements and Declarations}
{\bf Conflict of Interest} 
The authors declare no conflicts of interest.

\vspace{0.2in}
{\bf Data Availability}
Data sharing is not applicable to this article as no new data were created or analyzed in this study.

\bibliographystyle{alpha} 
\bibliography{main}

\begin{thebibliography}{dRJLS96}

\bibitem[BDG16]{BDG16}
J.~Bourgain, C.~Demeter, and L.~Guth.
\newblock Proof of the main conjecture in {V}inogradov's mean value theorem for
  degrees higher than three.
\newblock {\em Ann. of Math. (2)}, 184(2):633--682, 2016.

\bibitem[BG00]{BGLocal}
J.~Bourgain and M.~Goldstein.
\newblock On nonperturbative localization with quasi-periodic potential.
\newblock {\em Ann. of Math. (2)}, 152(3):835--879, 2000.

\bibitem[BGS01]{BGSSkew}
J.~Bourgain, M.~Goldstein, and W.~Schlag.
\newblock Anderson localization for {S}chr\"{o}dinger operators on {$\mathbb
  Z$} with potentials given by the skew-shift.
\newblock {\em Comm. Math. Phys.}, 220(3):583--621, 2001.

\bibitem[BJ00]{BJBand}
J.~Bourgain and S.~Jitomirskaya.
\newblock Anderson localization for the band model.
\newblock In {\em Geometric aspects of functional analysis}, volume 1745 of
  {\em Lecture Notes in Math.}, pages 67--79. Springer, Berlin, 2000.

\bibitem[Bou02]{BourgainSkew}
J.~Bourgain.
\newblock On the spectrum of lattice {S}chr\"odinger operators with
  deterministic potential.
\newblock {\em J. Anal. Math.}, 87:37--75, 2002.

\bibitem[Bou05]{Bou05}
J.~Bourgain.
\newblock {\em Green's function estimates for lattice {S}chr\"{o}dinger
  operators and applications}, volume 158 of {\em Annals of Mathematics
  Studies}.
\newblock Princeton University Press, Princeton, NJ, 2005.

\bibitem[DMY24]{DamanikMalinovitchYoung}
David Damanik, Tal Malinovitch, and Giorgio Young.
\newblock What is ballistic transport?, 2024.

\bibitem[dRJLS96]{dRJLS1996}
R.~del Rio, S.~Jitomirskaya, Y.~Last, and B.~Simon.
\newblock Operators with singular continuous spectrum. {IV}. {H}ausdorff
  dimensions, rank one perturbations, and localization.
\newblock {\em J. Anal. Math.}, 69:153--200, 1996.

\bibitem[DT07]{DamanikTcherem1}
D.~Damanik and S.~Tcheremchantsev.
\newblock Upper bounds in quantum dynamics.
\newblock {\em J. Amer. Math. Soc.}, 20(3):799--827, 2007.

\bibitem[DT08]{DamanikTcherem2}
D.~Damanik and S.~Tcheremchantsev.
\newblock Quantum dynamics via complex analysis methods: general upper bounds
  without time-averaging and tight lower bounds for the strongly coupled
  {F}ibonacci {H}amiltonian.
\newblock {\em J. Funct. Anal.}, 255(10):2872--2887, 2008.

\bibitem[Fil17]{Fillman}
J.~Fillman.
\newblock Ballistic transport for limit-periodic {J}acobi matrices with
  applications to quantum many-body problems.
\newblock {\em Comm. Math. Phys.}, 350(3):1275--1297, 2017.

\bibitem[GK23]{GeKachkovskiy}
L.~Ge and I.~Kachkovskiy.
\newblock Ballistic transport for one-dimensional quasiperiodic
  {S}chr\"{o}dinger operators.
\newblock {\em Comm. Pure Appl. Math.}, 76(10):2577--2612, 2023.

\bibitem[GSB02]{GuarneriSchulzBaldez}
I.~Guarneri and H.~Schulz-Baldez.
\newblock Lower bounds on wave packet propagation by packing dimensions of
  spectral measures.
\newblock {\em Math. Phys. Elect. J.}, pages 1--16, 2002.

\bibitem[GZ19]{gz19}
S.~Guo and R.~Zhang.
\newblock On integer solutions of {P}arsell-{V}inogradov systems.
\newblock {\em Invent. Math.}, 218(1):1--81, 2019.

\bibitem[Hae24]{Haeming2024}
L.~Haeming.
\newblock Quasiballistic transport for discrete one-dimensional quasiperiodic
  {S}chr\"odinger operators.
\newblock preprint, arXiv:2407.14228, 2024.

\bibitem[HJ19]{HanJitomirskaya}
R.~Han and S.~Jitomirskaya.
\newblock Quantum dynamical bounds for ergodic potentials with underlying
  dynamics of zero topological entropy.
\newblock {\em Anal. PDE}, 12(4):867--902, 2019.

\bibitem[HLS20a]{HanLemmSchlag2}
R.~Han, M.~Lemm, and W.~Schlag.
\newblock Effective multi-scale approach to the {S}chr\"{o}dinger cocycle over
  a skew-shift base.
\newblock {\em Ergodic Theory Dynam. Systems}, 40(10):2788--2853, 2020.

\bibitem[HLS20b]{HanLemmSchlag1}
R.~Han, M.~Lemm, and W.~Schlag.
\newblock Weyl sums and the {L}yapunov exponent for the skew-shift
  {S}chr\"{o}dinger cocycle.
\newblock {\em J. Spectr. Theory}, 10(4):1139--1172, 2020.

\bibitem[Jit99]{MetalInsulator}
S.~Ya. Jitomirskaya.
\newblock Metal-insulator transition for the almost {M}athieu operator.
\newblock {\em Ann. of Math. (2)}, 150(3):1159--1175, 1999.

\bibitem[JP22]{JPo}
S.~Jitomirskaya and M.~Powell.
\newblock Logarithmic quantum dynamical bounds for arithmetically defined
  ergodic {S}chr\"odinger operators with smooth potentials.
\newblock {\em Analysis at large: dedicated to the life and work of {J}ean
  {B}ourgain}, pages 172--201, 2022.

\bibitem[JZ22]{JitomirskayaZhang}
S.~Jitomirskaya and S.~Zhang.
\newblock Quantitative continuity of singular continuous spectral measures and
  arithmetic criteria for quasiperiodic {S}chr\"{o}dinger operators.
\newblock {\em J. Eur. Math. Soc. (JEMS)}, 24(5):1723--1767, 2022.

\bibitem[Kle14]{SilviusSkew}
S.~Klein.
\newblock Localization for quasiperiodic {S}chr\"{o}dinger operators with
  multivariable {G}evrey potential functions.
\newblock {\em J. Spectr. Theory}, 4(3):431--484, 2014.

\bibitem[Kor92]{Koro}
N.~M. Korobov.
\newblock {\em Exponential sums and their applications}, volume~80 of {\em
  Mathematics and its Applications (Soviet Series)}.
\newblock Kluwer Academic Publishers Group, Dordrecht, 1992.
\newblock Translated from the 1989 Russian original by Yu. N. Shakhov.

\bibitem[Las96]{LastQD1996}
Y.~Last.
\newblock Quantum dynamics and decompositions of singular continuous spectra.
\newblock {\em J. Funct. Anal.}, 142(2):406--445, 1996.

\bibitem[Liu22]{LiuPDE}
W.~Liu.
\newblock Quantitative inductive estimates for {G}reen's functions of
  non-self-adjoint matrices.
\newblock {\em Anal. PDE}, 15(8):2061--2108, 2022.

\bibitem[Liu23]{LiuQD23}
W.~Liu.
\newblock Power law logarithmic bounds of moments for long range operators in
  arbitrary dimension.
\newblock {\em J. Math. Phys.}, 64(3):Paper No. 033508, 11, 2023.

\bibitem[LP22]{LandriganPowell}
M.~Landrigan and M.~Powell.
\newblock Fine dimensional properties of spectral measures.
\newblock {\em J. Spectr. Theory}, 12(3):1255--1293, 2022.

\bibitem[Mon94]{hughbook}
H.~L. Montgomery.
\newblock {\em Ten lectures on the interface between analytic number theory and
  harmonic analysis}, volume~84 of {\em CBMS Regional Conference Series in
  Mathematics}.
\newblock Published for the Conference Board of the Mathematical Sciences,
  Washington, DC; by the American Mathematical Society, Providence, RI, 1994.

\bibitem[SS23]{ShamisSodinQD}
M.~Shamis and S.~Sodin.
\newblock Upper bounds on quantum dynamics in arbitrary dimension.
\newblock {\em J. Funct. Anal.}, 285(7):Paper No. 110034, 20, 2023.

\bibitem[Zha17]{Zhao}
Z.~Zhao.
\newblock Ballistic transport in one-dimensional quasi-periodic continuous
  {S}chr\"{o}dinger equation.
\newblock {\em J. Differ. Equ.}, 262(9):4523--4566, 2017.

\bibitem[ZZ17]{ZhangZhao}
Z.~Zhang and Z.~Zhao.
\newblock Ballistic transport and absolute continuity of one-frequency
  {S}chr\"{o}dinger operators.
\newblock {\em Comm. Math. Phys.}, 351(3):877--921, 2017.

\end{thebibliography}

\end{document}